\documentclass[11pt]{article}
\usepackage{geometry}
\geometry{verbose,letterpaper,tmargin=20mm,bmargin=20mm,lmargin=20mm,rmargin=20mm}
\usepackage{amssymb}
\usepackage{mathrsfs}
\usepackage{graphicx}
\usepackage{amsmath}

\usepackage{epsfig}

\newtheorem{theorem}{Theorem}[section]

\newtheorem{lemma}[theorem]{Lemma}
\newtheorem{proposition}[theorem]{Proposition}
\newtheorem{claim}[theorem]{Claim}
\newtheorem{definition}[theorem]{Definition}

\def\squarebox#1{\hbox to #1{\hfill\vbox to #1{\vfill}}}
\newcommand{\qed}{\hspace*{\fill}
\vbox{\hrule\hbox{\vrule\squarebox{.667em}\vrule}\hrule}\smallskip}

\newenvironment{proof}{\noindent{\bf Proof:~~}}{\(\qed\)}

\newcommand{\ignore}[1]{}
\newcommand{\vnoin}{\vspace{0.1in}\noindent}

\newcommand{\set}[1]{\{ #1 \}}

\newcommand{\floor}[1]{\lfloor {#1} \rfloor}

\newcommand{\lt}{\left}
\newcommand{\rt}{\right}

\DeclareMathOperator{\poly}{poly}



\def\min{\qopname\relax n{min}}
\def\max{\qopname\relax n{max}}

\def\argmax{\qopname\relax n{argmax}}

\def\Pr{\qopname\relax n{\mathbf{Pr}}}
\def\Ex{\qopname\relax n{\mathbf{E}}}

\def\R{\mathcal{R}}
\def\A{\mathcal{A}}

\newcommand{\comment}[1]{}

\DeclareMathOperator{\weight}{weight}
\DeclareMathOperator{\rweight}{r-weight}

\begin{document}

\title{On the Power of Randomization in Algorithmic Mechanism Design}\date{}
\author{Shahar Dobzinski\thanks{Supported by the Adams Fellowship
    Program of the Israel Academy of Sciences and Humanities, and by a
    grant from the Israeli Academy of
    Sciences.}\\
  School of Computer Science and Engineering\\
  Hebrew University of Jerusalem\\
  {\tt shahard@cs.huji.ac.il}
\and Shaddin Dughmi \thanks{Supported in part by NSF grant CCF-0448664.}\\
  Department of Computer Science\\
  Stanford University\\
  {\tt shaddin@cs.stanford.edu}
  }

\maketitle \setcounter{page}{0} \thispagestyle{empty}

\begin{abstract}
In many settings the power of truthful mechanisms is severely bounded. In this paper we use randomization to overcome this problem. In particular, we construct an FPTAS for multi-unit auctions that is truthful in expectation, whereas there is evidence that no polynomial-time truthful deterministic mechanism provides an approximation ratio better than $2$.

We also show for the first time that truthful in expectation polynomial-time mechanisms are \emph{provably} stronger than polynomial-time universally truthful mechanisms. Specifically, we show that there is a setting in which: (1) there is a non-polynomial time truthful mechanism that always outputs the optimal solution, and that (2) no universally truthful randomized mechanism can provide an approximation ratio better than $2$ in polynomial time, but (3) an FPTAS that is truthful in expectation exists.
\end{abstract}

\newpage

\section{Introduction}

\subsubsection*{Background and Our Results}

The last few years have been quite disappointing for researchers in Algorithmic Mechanism Design. Indeed, much progress has been made, but in the ``wrong'' direction: several papers proved that the power of polynomial time truthful mechanisms is severely bounded, either comparing to polynomial time non-truthful algorithms, or to non-polynomial time truthful mechanisms \cite{LMN03,DN07a,DN07b,DS08-add,PSS08,MPSS09}. This paper brings some good news: randomization might help in breaking these lower bounds to achieve better approximation guarantees.

Before studying the power of randomization in the context
of mechanism design, we recall the three common notions of
truthfulness:

\begin{itemize}
\item \textbf{Deterministic Truthfulness: }A bidder always maximizes
  his utility by bidding truthfully. No randomization is allowed.
\item \textbf{Universal Truthfulness: } A universally-truthful mechanism is a probability distribution over deterministic truthful mechanisms. The mechanism is truthful even when the realization of the random coins is known.
\item \textbf{Truthfulness in Expectation: }A mechanism is truthful in expectation if a bidder always maximizes his \emph{expected} profit by bidding truthfully. The expectation is taken over the internal random coins of the mechanism.
\end{itemize}

It is well known that, in some settings (e.g. online), randomized
algorithms are strictly more powerful than determinic
algorithms. In their seminal paper introducing algorithmic mechanism design, Nisan and Ronen \cite{NR99} showed that universally truthful mechanisms can be more powerful than their deterministic counterparts. In this paper we show that truthfulness in expectation yields more power than universal truthfulness.

For the well-studied problem of multi-unit auctions (e.g., \cite{MN02,BKV05,KPS04,NS06,LS05,DN07b,BBM08}), we show a polynomial time truthful in expectation mechanism with the best ratio possible from a pure algorithmic point of view:

\vnoin \textbf{Theorem: }There exists a truthful in expectation FPTAS for multi-unit auctions\footnote{An FPTAS is a $(1+\epsilon)$-approximation algorithm with running time that is also polynomial in $\frac 1 \epsilon$.}.

\vnoin There are many truthful in expectation mechanisms in the
literature \cite{AT01,APTT03,LS05,DDDR08}. However, somewhat
surprisingly, many truthful in expectation mechanisms were followed by
universally truthful mechanisms with the same performance
\cite{BKV05,D07,DNS06,DN07b}. As a result, there are few settings in
which the best known truthful in expectation algorithms provide better
approximation ratios than the best known universally truthful
mechanisms. Furthermore, \cite{MV04} shows that for digital good
auctions the two notions of randomization are equivalent in power. In
contrast, our second main result shows that truthful in expectation
mechanisms are strictly more powerful than universally truthful
ones. We study a variant of multi-unit auctions that we term
\emph{restricted multi-unit auctions}, and show a first-of-a-kind
separation result:

\vnoin \textbf{Theorem: } If $\A$ is a universally truthful mechanism for
restricted multi-unit auctions that achieves an approximation ratio of
$2-\epsilon$ for some constant $\epsilon>0$, then $\A$ has exponential
communication complexity. However, there exists a truthful in expectation FPTAS for restricted multi-unit auctions.

\vnoin We note that there exists a deterministic truthful mechanism that optimally solves restricted multi-unit auctions in exponential time.

\subsubsection*{Multi-unit Auctions}

In a multi-unit auction a set of $m$ identical items is to be
allocated to $n$ bidders. Each bidder $i$ has a valuation function
$v_i:[m]\rightarrow \mathbb {R^+}$, where $v_i$ is non-decreasing, and
normalized: $v_i(0)=0$. The goal is the usual one of finding an
allocation of the items $(s_1,\ldots,s_n)$ that maximizes the social
welfare: $\Sigma_i v_i(s_i)$. All items are identical, so algorithms
should run in time polynomial in the number of bits needed to
represent the number $m$, and the number of bidders: $\poly(n,\log
m)$.

It is not hard to see that in the ``single minded'' case, where each $v_i$ is a single-step function (i.e., for some $s^*,k>0$, $v_i(s)=k$ if $s\geq s^*$ and $v_i(s)=0$ otherwise), the problem is just a reformulation of the NP-hard Knapsack problem. The standard FPTAS for knapsack generalizes easily to multi-unit auctions.

The study of truthfulness in multi-unit auctions has a long history,
starting with Vickrey's 1961 paper \cite{Vic61}. The VCG mechanism is
truthful and solves the problem optimally, but is not
computationally-efficient (see, e.g., \cite{Nis07}). For the
single-minded case, Mu'alem and Nisan provided a truthful
polynomial-time $2$-approximation algorithm, followed by an FPTAS by
Briest et al \cite{BKV05}. In the general case, a truthful in
expectation $2$-approximation mechanism was presented by \cite{LS05},
followed by a deterministic $2$-approximation mechanism in
\cite{DN07b}. Can polynomial-time truthful mechanisms guarantee an
approximation ratio better than $2$? This is one of the major open
questions in algorithmic mechanism design, and only a partial answer
is known: a deterministic truthful mechanism with an approximation
ratio better than $2$ that \emph{always allocates all items} must run
in exponential time\footnote{Indeed, it can be assumed that a
  non-truthful algorithm always allocates all items without loss of
  generality from a pure algorithmic perspective. However, this
  assumption has game-theoretic implications.} \cite{LMN03,DN07b}. As
mentioned before, this paper provides an FPTAS for multi-unit auctions
that is truthful in expectation.

A word is in order on how the valuations are accessed. The valuation functions are objects of size $m$, whereas we are interested in algorithms that run in time polynomial in $\log m$ (the running time of the non-truthful FPTAS for multi-unit auctions). Hence, we assume that each valuation $v$ is given by a black box. For our upper bounds, the black box corresponding to $v$ needs to answer only the weak ``value queries'': given $s$, what is the value of $v(s)$. Our lower bound for the power of universally truthful mechanisms assumes a black box that can answer any query based on $v$ (the ``communication model'').

\subsubsection*{Main Result I: A Truthful FPTAS for Multi-unit Auctions}

We now give a short description of our truthful FPTAS for multi-unit auctions. The only technique known for designing truthful deterministic mechanisms for ``rich'' problems like multi-unit auctions is by designing maximal-in-range (henceforth MIR) algorithms. An algorithm $A$ is called \emph{maximal-in-range} if there is a set of allocations $\mathcal R$ (the ``range'') that does not depend on the input, such that $A$ always outputs the allocation in $\mathcal R$ that maximizes the welfare: $A(v_1,...\ldots,v_n)=\argmax_{(s1,\ldots,s_n)\in\mathcal R}\Sigma_i v_i(s_i)$. Using the VCG payment scheme together with an MIR mechanism results in a truthful mechanism. See \cite{DN07a} for a more formal discussion.

Therefore, one way to obtain truthful mechanisms is to identify a range that, on one hand, is ``rich'' enough to provide a good approximation ratio, and, on the other hand, is ``simple'' enough so that exact optimization over this range is computationally feasible. For multi-unit auctions, there is an MIR mechanism that provides an approximation ratio of $2$ in polynomial time, but unfortunately no MIR algorithm can provide a better ratio in polynomial time \cite{DN07b}.

In this paper we let the range $\mathcal R$ consist of \emph{distributions} over allocations. An algorithm that always selects the allocation that maximizes the \emph{expected} welfare and uses VCG payments is truthful in expectation. We term these mechanisms \emph{maximal in distributional range} (MIDR).

Before discussing our truthful FPTAS, recall the spirit of the standard non-truthful FPTAS for multi-unit auctions: each valuation is simplified by rounding down each value to the nearest power of $(1+\epsilon)$. Then, the best solution that uses only the "breakpoints" of the rounded valuations is found. This solution has a value close to the optimal unrestricted solution. However, to guarantee truthfulness via a maximal-in-range algorithm we must find the solution with the \emph{optimal} welfare in the range. Thus, we give a ``weight'' $w_{\vec s}$ to each allocation $\vec s$: when $\vec s$ is selected by the algorithm, the output will be $\vec s$ with probability $w_{\vec s}$, and with probability $1-w_{\vec s}$ no bidder will receive any items, effectively reducing the expected welfare of the allocation by a factor of $w_{\vec s}$. The weight of a less ``structured'' allocation is smaller than the weight of a more ``structured'' one. We show that in the optimal solution among these ``weighted allocations'' each bidder is assigned a bundle that is a breakpoint, or very close to a breakpoint of his valuation. Thus the optimal solution can be found efficiently using exhaustive search, when the number of bidders is a constant, since the number of breakpoints of each bidder is polynomial in the number of bits.

Efficiently handling any number of bidders is more involved. Given two bidders we define a ``meta-bidder'' by merging their valuations: the value of the meta bidder for $s$ items is equal to the value of the optimal solution that allocates $s$ items between the two bidders, using the weighted allocations mentioned above. We recursively define new meta-bidders given the previous ones, until we are left with only two meta bidders. Now the optimal solution can be found efficiently.

We note that MIDR mechanisms were used in \cite{LS05}, although only implicitly. The beautiful construction of \cite{LS05} is general and applies to many settings. However, its strength and weakness is its generality: for some settings, a specifically-tailored mechanism might have more power than a mechanism obtained from the general construction of \cite{LS05}. Indeed, for some of the most important settings discussed in \cite{LS05} better constructions have been found \cite{DNS06,D07,DN07b}. Moreover, the LP-based techniques of \cite{LS05} cannot guarantee an approximation ratio better than the integrality gap for multi-unit auctions, which is $2$ (see \cite{LS05}). The techniques in this paper may help in designing better mechanisms in settings where \cite{LS05} performs poorly, like combinatorial auctions with submodular bidders (see below).

\subsubsection*{Main Result II: Truthful in Expectation Mechanisms are more Powerful}

Next we prove that truthful in expectation mechanisms are strictly
more powerful than universally truthful ones. Ideally, we would like
to prove that universally truthful mechanisms for multi-unit auctions
cannot provide an approximation ratio better than $2$ in polynomial
time. However, this remains open question even for deterministic
mechanisms. Hence, we study a close variant of multi-unit auctions,
called \emph{restricted multi-unit auctions}. The FPTAS for multi-unit
auctions extends almost immediately to restricted multi-unit
auctions. It is more involved to show that universally truthful
polynomial time mechanisms cannot provide an approximation ratio
better than $2$.

Roughly speaking, we first show that a polynomial-time universally
truthful mechanism with an approximation ratio of $\alpha$ must yield
a polynomial-time \emph{deterministic} truthful mechanism with an
approximation ratio of $\alpha$ on ``many'' instances. We then show
that this deterministic mechanism must be an affine maximizer (a
slight generalization of MIR algorithms). Finally, we prove that
deterministic affine maximizers cannot provide an approximation
ratio better than $2$ for restricted multi-unit auctions in polynomial
time for ``many'' instances, hence no polynomial-time universally
truthful mechanism for restricted multi-unit auctions with an
approximation ratio better than $2$ exists.

%
%
%

\subsubsection*{Open Questions}

The obvious open question is whether the techniques of this paper can be extended to improve the approximation guarantees of other questions in algorithmic mechanism design, such as variants of combinatorial auctions, or combinatorial public projects \cite{PSS08}. In particular:

\vnoin \textbf{Main Open Question: }Is there a truthful mechanism for combinatorial auctions with subadditive (or submodular) bidders that provides a constant approximation ratio?

\vnoin There are non-truthful algorithms with constant approximation
ratios for combinatorial auctions with subadditive
bidders\cite{F06,V08}. However, the best known polynomial-time
truthful mechanisms provide a ratio that is no better than logarithmic
\cite{D07,DNS06}. It seems that our current techniques reached a dead
end: random sampling methods do not seem amenable to an approximation
factor that is better than logarithmic, deterministic MIR mechanisms
cannot provide an approximation ratio better than $m^{\frac 1 6}$ in
polynomial time \cite{DN07a}, and the techniques of \cite{LS05} are
not applicable to this setting. The methods of this paper might help
in constructing truthful in expectation mechanisms. Alternatively,
\emph{lower bounds} on the power of MIDR mechanisms would also be
extremely interesting.


\section{Preliminaries}

\subsection{The Setting}

In a multi-unit auction there is a set of $m$ identical items, and a
set $N=\set{1,2,\ldots,n}$ of bidders. Each bidder $i$ has a valuation
function $v_i:[m]\rightarrow \mathbb R^+$, which is normalized
($v_i(0)=0$) and non-decreasing. Denote by $V$ the set of possible
valuations. An allocation of the items $\vec s=(s_1,\ldots,s_n)$ to
$N$ is a vector of non-negative integers with $\Sigma_is_i\leq
m$. Denote the set of allocations by $S$. The goal is to find an
allocation that maximizes the welfare: $\Sigma_iv_i(s_i)$.

The valuations are given to us as black boxes. For algorithms, the black box $v$ will only answer the weak value queries: given $s$, what is the value of $v(s)$. For the impossibility result, we assume that the black box $v$ can answer any query that is based on $v$ (the ``communication model''). Our algorithms run in time $\poly(n,\log m)$, while our impossibility result gives a lower bound on the number of bits transferred, and holds even if the mechanism is computationally unbounded.

\subsection{Truthfulness}

An $n$-bidder mechanism for multi-unit auctions is a pair $(f,p)$ where $f:V^n \rightarrow S$ and $p=(p_1,\cdots, p_n)$, where $p_i:V^n\rightarrow \mathbb R$. $(f,p)$ might be either randomized or deterministic.

\begin{definition}
Let $(f,p)$ be a deterministic mechanism. $(f,p)$ is \emph{truthful} if for all $i$, all $v_i, v'_i$ and all $v_{-i}$ we have that $v_i(f(v_i,v_{-i})_i)-p_i(v_i,v_{-i})\geq v'_i(f(v'_i,v_{-i})_i)-p(v'_i,v_{-i})$.
\end{definition}

\begin{definition}
$(f,p)$ is \emph{universally truthful} if it is a probability distribution over truthful deterministic mechanisms.
\end{definition}

\begin{definition}
  $(f,p)$ is \emph{truthful in expectation} if for all $i$, all $v_i,
  v'_i$ and all $v_{-i}$ we have that
  $\Ex[v_i(f(v_i,v_{-i})_i)-p(v_i,v_{-i})]\geq
  \Ex[v'_i(f(v'_i,v_{-i})_i)-p_i(v'_i,v_{-i})]$, where the expectation
  is over the internal random coins of the algorithm.
\end{definition}

\subsection{Maximal in Range, Maximal in Distributional range, and Affine Maximizers}

\begin{definition}
$f$ is an \emph{affine maximizer} if there exist a set of allocations $\mathcal R$, a constant $\alpha_i \ge 0$ for $i\in N$, and a constant $\beta_{\vec s} \in \Re $ for each $\vec s\in S$, such that $f(v_1,...,v_n) \in \argmax_{\vec s=(s_1,\ldots,s_n)\in \mathcal R} (\Sigma_i (\alpha_i v_i(s_i)) + \beta_s)$. $f$ is called \emph{maximal-in-range} (MIR) if $\alpha_i=1$ for $i\in N$, and $\beta_s=0$ for each $\vec s\in \mathcal R$.
\end{definition}

\begin{definition}
  $f$ is a \emph{distributional affine maximizer} if there exist a set
  of distributions over allocations $\mathcal D$, a constant $\alpha_i
  \ge 0$ for $i\in N$, and a constant $\beta_{D} \in \Re $ for each
  $D\in \mathcal D$) such that:
  $f(v_1,...,v_n) \in \argmax_{D\in\mathcal D}\lt(
    \Ex_{\vec{s} \sim D} \left[
    \Sigma_i \alpha_i \cdot v_i(s_i)\right] + \beta_D\rt) $.
We say $f$ is \emph{maximal in distributional range} (MIDR) if
$\alpha_i=1$ for all $i$ in $N$, and $\beta_D=0$ for each $D\in
\mathcal D$.
\end{definition}

The following proposition is standard:

\begin{proposition} The following statements are true:

\begin{enumerate}

\item Let $f$ be an affine maximizer (in particular, maximal in range). There are payments $p$ such that $(f,p)$ is a truthful mechanism.

\item Let $f$ be a distributional affine maximizer  (in particular, maximal in distributional range). There are payments $p$ such that $(f,p)$ is a truthful in expectation mechanism.
\end{enumerate}

\end{proposition}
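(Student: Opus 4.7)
The plan is to exhibit the payments explicitly using the appropriate generalization of the VCG payment rule, then verify the truthfulness inequality by a one-line algebraic manipulation. For part 1, given an affine maximizer with weights $\alpha_i$ and offsets $\beta_{\vec s}$, I would set
\[
p_i(v_1,\ldots,v_n) \;=\; -\frac{1}{\alpha_i}\left(\sum_{j\neq i}\alpha_j\, v_j(f(v)_j) + \beta_{f(v)}\right) + h_i(v_{-i}),
\]
where $h_i$ is any function of $v_{-i}$ (it does not affect incentives and can be chosen to make payments non-negative if desired). Then the quasi-linear utility of bidder $i$ under report $v'_i$ equals
\[
\frac{1}{\alpha_i}\Bigl(\alpha_i\, v_i(f(v'_i,v_{-i})_i) + \sum_{j\neq i}\alpha_j\, v_j(f(v'_i,v_{-i})_j) + \beta_{f(v'_i,v_{-i})}\Bigr) - h_i(v_{-i}),
\]
and by the definition of $f$ as the argmax of exactly the bracketed expression (with $v_i$ in place of $v'_i$ in the $i$-th term), the utility is maximized at $v'_i = v_i$. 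The edge case $\alpha_i=0$ is handled trivially: bidder $i$'s report never enters the maximand, so any payment depending only on $v_{-i}$ renders $i$ indifferent and hence truthful.

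For part 2, I would mimic the same construction, but replacing the realized welfare of others with their expected welfare under the chosen distribution. Concretely, for a distributional affine maximizer $f:V^n\to \mathcal{D}$, define
\[
p_i(v) \;=\; -\frac{1}{\alpha_i}\left(\sum_{j\neq i}\alpha_j\, \Ex_{\vec s\sim f(v)}[v_j(s_j)] + \beta_{f(v)}\right) + h_i(v_{-i}).
\]
The expected utility of bidder $i$ under report $v'_i$ then equals $\tfrac{1}{\alpha_i}$ times the quantity $\sum_j \alpha_j\Ex_{\vec s\sim f(v'_i,v_{-i})}[v_j(s_j)] + \beta_{f(v'_i,v_{-i})}$ (minus the $v_{-i}$-only term $h_i$), and truthful reporting is optimal by the MIDR/distributional-affine-maximizer property of $f$.

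There is no real obstacle here; the content of the proposition is entirely that the affine-maximizer objective is the right inner optimization problem for a VCG-style payment to align incentives, which is exactly what the two displayed expressions formalize. The MIR and MIDR statements follow as the special case $\alpha_i=1$, $\beta=0$, recovering the classical Vickrey–Clarke–Groves payments in the deterministic case and their in-expectation analogue in the randomized case.
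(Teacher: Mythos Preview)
Your proof is correct and is precisely the standard weighted-VCG argument; the paper does not actually supply a proof of this proposition, merely labeling it ``standard'' and moving on. Your explicit payment formulas and the one-line utility computation are exactly what is implicitly being invoked.
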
 


\section{A Truthful FPTAS for Multi-Unit Auctions}\label{sec:fptas}

This section provides a $(1+\epsilon)$-approximation algorithm for multi-unit auctions that is truthful in expectation by providing a maximal-in-distributional-range mechanism. The running time is polynomial in $n,\log m$, and $\frac 1 \epsilon$. The construction has several stages. We start by defining a certain family of ranges, structured ranges. The main property of this family is that finding the optimal solution in a structured range is computationally feasible when the number of bidders is fixed. To handle any number of bidders, we use a more involved tree-like construction, where structured ranges serve as the basic building block. We start by defining structured ranges.

\subsection{Structured Ranges}

\subsubsection{Definition}

Towards constructing an FPTAS, fix the approximation parameter $\epsilon$ such that $ 0 < \epsilon < \frac 1 2$. It will be useful to let $\delta=\frac {\ln \frac 1 {1-\epsilon}}{2\log m+2}$. We construct ranges that consist only of weighted allocations:

\begin{definition}\label{def-weight}
A \emph{weighted allocation} $w\cdot \vec s$ is a distribution over allocations of the following form: with some probability $w$ the allocation $\vec s=(s_1,\dots,s_n)$ is chosen, and with probability $1-w$ the empty allocation (where each bidder is allocated an empty set) is chosen. $w$ is called the \emph{weight} of the weighted allocation.
\end{definition}

The weight of a weighted allocation of $m$ items to $n$ bidders is
determined using the following function $\weight_\epsilon:S\rightarrow
[1-\epsilon,1]$, where $S$ is the set of all allocations of $m$ items
to $n$ bidders. Let $t$ be the maximal (non-negative) integer such
that for each $i$, $s_i$ is a multiple of $2^t$. Let $p=(1+2\delta)$
if $s_i\neq 0$ for $n-1$ bidders, let $p=1$ otherwise. Now define
$\weight_\epsilon(\vec s)=(1-\epsilon)\lt(1+2 \delta
\rt)^{t}p$. Observe that the number of different outputs of
$\weight_\epsilon$ is $O(\log m)$, where increasing $t$ by $1$ makes
the output value increase by a factor of $(1+2\delta)$.

A structured range consists only of weighted allocations whose weight is determined by $\weight_\epsilon$. Informally, in a structured range ``structurally simpler'' allocations have greater weight.

\begin{definition}
An \emph{$\epsilon$-structured range} for $n$ bidders and $m$ items is the following set of weighted allocations: $\set{\weight_{\epsilon}(\vec s)\cdot \vec s}_{\vec s\in S}$, where $S$ is the set of allocations for these $n$ bidders. We sometimes call an $\epsilon$-structured range a structured range when $\epsilon$ is clear from context.
\end{definition}

The optimal solution in a structured range provides a good
approximation to the (unweighted) optimal solution: every allocation
is in the range with weight at least $(1-\epsilon)$. In particular,
the optimal (unweighted) allocation is in the range with a weight of at least $(1-\epsilon)$.

\begin{lemma}\label{lemma-approx}
The expected approximation ratio of an algorithm that optimizes over an  $\epsilon$-structured range is at least $(1-\epsilon)$.
\end{lemma}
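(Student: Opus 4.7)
The plan is to use the optimal unweighted allocation as a benchmark witness that lies (in weighted form) inside the structured range. Let $\vec s^* \in \argmax_{\vec s \in S} \sum_i v_i(s_i)$ and write $OPT = \sum_i v_i(s^*_i)$ for the optimal (unweighted) welfare.

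First I would observe that by definition of the $\epsilon$-structured range, the weighted allocation $\weight_\epsilon(\vec s^*) \cdot \vec s^*$ belongs to the range. Its expected welfare is exactly $\weight_\epsilon(\vec s^*) \cdot OPT$, since with probability $\weight_\epsilon(\vec s^*)$ the algorithm realizes $\vec s^*$ (yielding welfare $OPT$) and with the remaining probability it realizes the empty allocation (yielding welfare $0$). Using the range of $\weight_\epsilon$ stated in the paper, namely $\weight_\epsilon : S \to [1-\epsilon, 1]$, this expected welfare is at least $(1-\epsilon)\cdot OPT$.

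Next I would invoke the optimization property: an algorithm that optimizes over the $\epsilon$-structured range picks the weighted allocation $\weight_\epsilon(\vec s)\cdot \vec s$ whose expected welfare $\weight_\epsilon(\vec s)\sum_i v_i(s_i)$ is maximal. Since the candidate $\weight_\epsilon(\vec s^*)\cdot \vec s^*$ achieves expected welfare at least $(1-\epsilon)\cdot OPT$, the algorithm's chosen weighted allocation must also achieve expected welfare at least $(1-\epsilon)\cdot OPT$. Dividing by $OPT$ gives the claimed expected approximation ratio of $(1-\epsilon)$.

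There is essentially no obstacle here: the lemma is a direct consequence of (i) every allocation appearing in the range, merely downscaled by a factor in $[1-\epsilon,1]$, and (ii) the algorithm's choice being optimal over the range. The only thing to verify carefully is the interpretation of ``expected approximation ratio''---i.e.\ that the algorithm's random output has expected welfare compared against the deterministic $OPT$---and that the weight function's lower bound $1-\epsilon$ is applied to $\vec s^*$ specifically. Both are immediate from the definitions of weighted allocation and $\weight_\epsilon$ given above.
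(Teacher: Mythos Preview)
Your proposal is correct and matches the paper's own reasoning essentially verbatim: the paper states immediately before the lemma that ``every allocation is in the range with weight at least $(1-\epsilon)$. In particular, the optimal (unweighted) allocation is in the range with a weight of at least $(1-\epsilon)$,'' and leaves the lemma without a separate proof environment. Your argument spells out exactly this observation together with the optimality of the algorithm over the range.
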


\subsubsection{The Optimal Solution is in Neighborhoods of Breakpoints}

The important property of structured ranges is that finding an optimal weighted allocation is computationally easy. We show that the optimal solution consists of bundles that are near breakpoints, where breakpoints are bundles at which the valuation first exceeds a power of $1+\delta$.

\begin{definition}
A bundle $s$ is called a \emph{$b$-breakpoint} of a valuation $v$ if it is the smallest multiple of $b$ s.t. $v(s)\geq (1+\delta)^lv(b)$, for some integer $l\geq 0$. The bundle $0$ is also considered a $b$-breakpoint of $v$, for all $b$.
\end{definition}

\begin{definition}
Let $s$ be a $b$-breakpoint of $v$. The \emph{neighborhood} of $s$ consists of the bundles $s, s+b,s+2b,\ldots,s+n\cdot b$.
\end{definition}

\begin{lemma}\label{lemma-optimize}
Let $\weight_\epsilon(\vec o)\cdot \vec o$ be a weighted allocation that maximizes the welfare in a structured range among all allocations with $\Sigma_is_i\leq s$, for some $m\geq s\geq 0$. Let $t$ be the maximal non-negative integer s.t. each $o_i$ is a multiple of $2^t$. Then, each $o_i$ is in the neighborhood of a $2^t$-breakpoint.
\end{lemma}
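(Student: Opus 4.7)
The plan is to argue by contradiction. Assume $\vec o$ is optimal in the $\epsilon$-structured range subject to $\sum_i o_i \leq s$, yet some coordinate $o_i$ lies outside every neighborhood of a $2^t$-breakpoint of $v_i$. I will construct a feasible $\vec o'$ in the range whose expected welfare $\weight_\epsilon(\vec o')\sum_j v_j(o'_j)$ strictly exceeds $\weight_\epsilon(\vec o)\sum_j v_j(o_j)$, contradicting optimality.

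First I will extract the key value inequality. Let $s^*$ be the largest $2^t$-breakpoint of $v_i$ with $s^* \leq o_i$, and $s^{**}$ the next $2^t$-breakpoint of $v_i$. Since $o_i$ is a multiple of $2^t$ outside $\{s^*, s^*+2^t, \ldots, s^*+n\cdot 2^t\}$, we have $o_i \geq s^* + (n+1)2^t$ and $o_i < s^{**}$. The defining property of $2^t$-breakpoints (as the smallest multiples of $2^t$ crossing successive powers of $1+\delta$) implies that every multiple of $2^t$ in $[s^*, s^{**}-2^t]$ has $v_i$-value strictly less than $(1+\delta)\, v_i(s^*)$. In particular $v_i(o_i) < (1+\delta)\, v_i(s^*)$, and therefore, by monotonicity, $v_i(o'_i) > v_i(o_i)/(1+\delta)$ whenever $s^* \leq o'_i \leq o_i$.

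Next I will force the common $2$-adic order up from $t$ to at least $t+1$. Let $I$ be the set of indices $j$ for which $o_j \neq 0$ and $o_j/2^t$ is odd; by definition of $t$, every other nonzero coordinate is already a multiple of $2^{t+1}$. For each $j \in I \setminus \{i\}$, I set $o'_j := o_j + 2^t$; since $o_j/2^t$ is odd, $o'_j$ is a multiple of $2^{t+1}$, and $v_j(o'_j) \geq v_j(o_j)$ by monotonicity. To finance the $|I \setminus \{i\}|\cdot 2^t$ extra items, I choose $o'_i$ to be a multiple of $2^{t+1}$ in the interval $[s^*,\, o_i - |I \setminus \{i\}|\cdot 2^t]$. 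Because $|I \setminus \{i\}| \leq n-1$, this interval has length at least $(n+1)2^t - (n-1)2^t = 2^{t+1}$, so such a multiple exists; moreover it can be chosen strictly positive (preserving the nonzero count and hence the factor $p$) since $o_i \geq (n+1)2^t \geq 2^{t+1}$.

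Putting it together, $\vec o'$ is feasible, its common $2$-adic order $t'$ is at least $t+1$, and the factor $p$ is unchanged; hence $\weight_\epsilon(\vec o') \geq (1+2\delta)\,\weight_\epsilon(\vec o)$, while $\sum_j v_j(o'_j) \geq \sum_j v_j(o_j)/(1+\delta)$ because welfare decreases only at bidder $i$ and by at most a factor $(1+\delta)$. Since $(1+2\delta)/(1+\delta) > 1$, the expected welfare strictly increases, contradicting the optimality of $\vec o$. I expect the trickiest step to be the combined parity-and-slack bookkeeping for $o'_i$: it must simultaneously lie in $[s^*,\, o_i - |I\setminus\{i\}|\cdot 2^t]$ and be divisible by $2^{t+1}$, and the bound $|I\setminus\{i\}| \leq n-1$ is precisely what leaves the $2^{t+1}$ of slack needed to guarantee such a multiple exists.
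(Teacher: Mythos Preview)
Your argument is correct and follows essentially the same route as the paper: assume some $o_i$ lies outside every $2^t$-breakpoint neighborhood, use the resulting $\geq (n+1)2^t$ slack at bidder $i$ to push every coordinate up to a multiple of $2^{t+1}$, and then observe that the weight gain of $(1+2\delta)$ strictly dominates the at-most-$(1+\delta)$ value loss. The only cosmetic difference is that the paper first rounds \emph{every} $o_j$ down to its nearest $2^t$-breakpoint and then rounds each up to an even multiple of $2^t$, whereas you round only bidder $i$ down and bump the odd-parity bidders up directly; your version has the minor advantage of explicitly preserving the set of nonzero bidders (hence the factor $p$), a point the paper handles more loosely.
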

\begin{proof}
Suppose towards a contradiction that there is some bidder $k$, where $o_k$ is not in the neighborhood of a $2^t$-breakpoint. Round down each $o_i$ (that is not a $2^t$-breakpoint) to the nearest $2^t$-breakpoint of $v_i$ denoted $\overline o_i$. Let $c_i$ be such that $\overline o_i=c_i\cdot 2^t$. For each $i$, define $c'_i$ as follows:
\[
c'_i=\left\{
       \begin{array}{ll}
         c_i+1, & \hbox{$c_i$ is odd;} \\
         c_i, & \hbox{$c_i$ is even.}
       \end{array}
     \right.
\]
For each $i$, let $a_i=c'_i\cdot 2^t$. We will show that the expected welfare of $\weight_\epsilon(\vec a)\cdot \vec o$ is greater than that of the optimal solution $\weight_\epsilon(\vec o)\cdot \vec o$. Note that $\Sigma_ia_i\leq s$: on one hand by our assumption, $o_k-\overline o_k\geq n\cdot 2^t$. On the other hand, $\Sigma_i(a_i-\overline o_i)\leq n\cdot 2^t$, since for each $i$, $a_i-\overline o_i= 2^t$ if $\overline o_i$ is odd, else $a_i-\overline o_i=0$. (In a sense, we ``take'' items from bidder $k$ and spread them among the other bidders.)

We now calculate the expected welfare of $\vec a$. All $c'_i$'s are even, thus each $a_i$ is a multiple of $2^{t+1}$. In addition, the number of bidders that receive no items is larger in $\vec a$. Hence $\weight_\epsilon(\vec a)\geq (1+2\delta)\weight_\epsilon(\vec o)$, by the definition of $\weight_\epsilon$. Since for each $i$, $a_i\geq \overline o_i$, the monotonicity of the valuations implies that:
\begin{align*}
  \weight_\epsilon(\vec a)\cdot \Sigma_iv_i(a_i) &\geq
  (1+2\delta)\weight_\epsilon(\vec o)\cdot \Sigma_iv_i(a_i) \\
  &\geq (1+2\delta)\weight_\epsilon(\vec o)\cdot\Sigma_iv_i(\overline
  o_i)
\end{align*}
Using the definition of a breakpoint, the value of each $v_i(o_i)$ is close to the value of $v_i(\overline o_i)$:
\begin{align*}
  (1+2\delta)\weight_\epsilon(\vec o)\cdot\Sigma_iv_i(\overline o_i)
  &\geq (1+2\delta)\weight_\epsilon(\vec o)\cdot\Sigma_i\frac
  {v_i(o_i)}{1+\delta}\\
  &> \weight_\epsilon(\vec o)\cdot\Sigma_iv_i(o_i)
\end{align*}
I.e., the expected welfare of the distribution $\weight_\epsilon(\vec a)\cdot \vec a$ is greater than that of the optimal one: $\weight_\epsilon(\vec o)\cdot \vec o$. A contradiction.
\end{proof}

\subsection{Warmup: An FPTAS for a Fixed Number of Bidders}

As a warm-up we provide an MIDR $(1+\epsilon)$-approximation algorithm
for a fixed number of bidders (Figure \ref{algwarmup}). The idea is to fully optimize over a
structured range. Specifically, the number of value queries the
algorithm makes is $\poly\lt(n,\log m,\frac 1 \delta,\log \max_i
\frac{v_i(m)} {v_i(1)} \rt)=\poly\lt(n,\log m,\frac 1 \epsilon,\log
\max_i\frac{v_i(m)} {v_i(1)} \rt)$. This algorithm has two
drawbacks. The first major one is that while the number of value
queries is small, the \emph{running time} of the algorithm is
exponential in $n$: exhaustive search is used to find the best
allocation that consists only of bundles in the neighborhood of
breakpoints. However, for a fixed number of bidders the exhaustive
search (and hence the algorithm) is also computationally
efficient\footnote{Fixing the number of bidders is still an
  interesting case, as the communication lower bound \cite{NS06} and
  the lower bound for MIR algorithms \cite{DN07b} use only $2$
  bidders. Furthermore, interestingly enough an FPTAS for a constant
  number of bidders implies a computationally efficient PTAS for a
  general numbers of bidders, as a straightforward extension of the
  result of \cite{DN07b}.}.

The second drawback is that the running time and the number of value
queries depends (polynomially) on $\log \max_i\frac{v_i(m)} {v_i(1)}$
(in a sense, the algorithm is only ``weakly polynomial''). This can be
fixed (using ``significant breakpoints'' -- see the next section), but
to keep the presentation of this warmup simple we do not address this
here. We stress that the FPTAS for a general number of bidders we
present in the next subsection is ``strongly polynomial''. I.e., runs
in time $\poly\lt(n,\log m,\frac 1\epsilon \rt)$.

\begin{figure}[center,h]
\hrulefill
\begin{enumerate}
    \item For each $t=0,\ldots,\floor {\log m }$:
    \begin{itemize}
        \item Find the $2^t$-breakpoints of each bidder $i$.
        \item Let $\vec {s^t}=(s^t_1,\ldots ,s^t_{n})$ be the allocation maximizing welfare among all allocations $(s_1,\ldots s_n)$ where for each $i$, $s_i$ is in the neighborhood of a $2^t$-breakpoint of $v_i$.
    \end{itemize}
    \item Let $\weight_\epsilon(\vec s)\cdot \vec s$ be the welfare
      maximizing distribution in
      $\{\weight_\epsilon(\vec{s^0})\cdot \vec {s^0}, \ldots,
      \weight_\epsilon(\vec{s^{\floor {\log m}}}) \cdot \vec{s^{\floor
          {\log m}}} \}$. Return $\weight_\epsilon(\vec s)\cdot \vec s$.
\end{enumerate}
 \hrulefill
\caption{FPTAS for Fixed Number of Bidders}
\label{algwarmup}
\end{figure}

\begin{lemma}
  The number of value queries the algorithm makes is $\poly\lt(\log m,
  n,\frac 1 \epsilon,\log \max_i\frac{v_i(m)} {v_i(1)}\rt)$.
\end{lemma}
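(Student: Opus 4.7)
The plan is a direct accounting of the queries issued in each step of the algorithm. The algorithm only touches the valuations in two places: (i) when it enumerates the $2^t$-breakpoints of each $v_i$, and (ii) when it evaluates the welfare of a candidate allocation. I will bound each separately and then sum.

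First I would bound the number of $b$-breakpoints of a single valuation $v_i$. By the definition of a $b$-breakpoint, distinct breakpoints correspond to distinct exponents $l \ge 0$ with $(1+\delta)^l v_i(b) \le v_i(m)$, so their number is at most $\log_{1+\delta}\!\bigl(v_i(m)/v_i(b)\bigr) + 1 = O\!\bigl(\delta^{-1}\log(v_i(m)/v_i(1))\bigr)$. Since $\epsilon < 1/2$ we have $\ln\!\tfrac{1}{1-\epsilon} \ge \epsilon$, so $\delta^{-1} = O(\epsilon^{-1}\log m)$, and the number of $b$-breakpoints is $O\!\bigl(\epsilon^{-1}\log m \cdot \log(v_i(m)/v_i(1))\bigr)$. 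Each breakpoint can be located by binary search over the at most $m/b \le m$ multiples of $b$, costing $O(\log m)$ value queries per breakpoint. Multiplying by the $n$ bidders and the $\lfloor \log m\rfloor + 1$ values of $t$, the total number of queries used to build all breakpoint lists is $\mathrm{poly}\!\bigl(n, \log m, 1/\epsilon, \log\max_i (v_i(m)/v_i(1))\bigr)$.

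Next I would bound queries used in the welfare-optimization sub-step. For each $t$, the search considers only allocations in which every $s_i$ lies in the neighborhood of some $2^t$-breakpoint of $v_i$; each neighborhood has $n+1$ bundles, so bidder $i$ contributes at most $(n+1)\cdot O\!\bigl(\epsilon^{-1}\log m \cdot \log(v_i(m)/v_i(1))\bigr)$ distinct bundles. Rather than querying $v_i$ afresh for every candidate allocation, one caches $v_i(s)$ once per distinct bundle $s$ that appears, yielding a total of $\mathrm{poly}\!\bigl(n, \log m, 1/\epsilon, \log\max_i (v_i(m)/v_i(1))\bigr)$ additional value queries (the combinatorial explosion in the number of allocations affects running time but not query count). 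Summing the two contributions gives the stated bound.

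The only nuisance is the arithmetic; there is no real obstacle. The one spot that deserves care is the conversion $\delta^{-1} = O(\epsilon^{-1}\log m)$, since the definition of $\delta$ uses $\ln\!\tfrac{1}{1-\epsilon}$ rather than $\epsilon$ directly; the bound $\ln\!\tfrac{1}{1-\epsilon} \ge \epsilon$ for $\epsilon \in (0,\tfrac12)$ handles this cleanly. All remaining estimates are routine polynomial arithmetic in the four parameters $\log m$, $n$, $1/\epsilon$, and $\log\max_i (v_i(m)/v_i(1))$.
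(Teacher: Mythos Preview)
Your proposal is correct and follows essentially the same counting argument as the paper: bound the number of $2^t$-breakpoints of each $v_i$ by $O(\log_{1+\delta}(v_i(m)/v_i(1)))$, locate each with $O(\log m)$ binary-search queries, and multiply by the $n$ bidders and $O(\log m)$ values of $t$. Your treatment is slightly more explicit than the paper's (you separate out the neighborhood queries and spell out the $\delta^{-1}=O(\epsilon^{-1}\log m)$ conversion), but the underlying approach is the same.
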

\begin{proof}
  For each one of the possible $\floor{\log m}$ values of $t$, we find
  all $O(\log_{1+\delta} \max_i\frac{v_i(m)} {v_i(1)})$ breakpoints of
  each bidder $i$, and then query the $n$ bundles in the neighborhood
  of the breakpoints. Binary search finds a single breakpoint in
  $O(\log m)$ queries, thus the total number of queries is $O(n^2
  \cdot \log^2 m \log_{1+\delta} \max_i\frac{v_i(m)} {v_i(1)})=\poly
  \lt(\log m, n,\frac 1 \epsilon,\log \max_i\frac{v_i(m)} {v_i(1)}
  \rt)$.
\end{proof}

The following is an immediate corollary of Lemma \ref{lemma-optimize} (setting $s=m$ in the statement of the lemma):
\begin{lemma}
The algorithm finds a welfare maximizing distribution in an $\epsilon$-structured range.
\end{lemma}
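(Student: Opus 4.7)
The plan is to apply Lemma~\ref{lemma-optimize} with $s = m$ and then verify that the outer loop of the algorithm in Figure~\ref{algwarmup} enumerates the relevant parameter $t$ appearing in that lemma. Fix any welfare-maximizing distribution $\weight_\epsilon(\vec o) \cdot \vec o$ in the $\epsilon$-structured range, and let $t^*$ be the largest non-negative integer such that $2^{t^*}$ divides every $o_i$. Since $\sum_i o_i \leq m$, any nonzero $o_i$ satisfies $2^{t^*} \leq o_i \leq m$, so $t^* \leq \floor{\log m}$ (the all-zero allocation has welfare zero and may be ignored). Lemma~\ref{lemma-optimize}, applied with $s = m$ so that $\vec o$ qualifies as the welfare-maximizer in its statement, then tells us that each $o_i$ lies in the neighborhood of a $2^{t^*}$-breakpoint of $v_i$.

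Next I would observe that at iteration $t = t^*$ of step~1 of the algorithm, $\vec o$ is one of the allocations over which $\vec{s^{t^*}}$ is selected, since each $o_i$ already sits in the neighborhood of a $2^{t^*}$-breakpoint of $v_i$. Interpreting ``welfare'' as the expected welfare of the induced weighted allocation $\weight_\epsilon(\vec s) \sum_i v_i(s_i)$ — which is the natural objective in the present context — the chosen $\vec{s^{t^*}}$ therefore satisfies $\weight_\epsilon(\vec{s^{t^*}}) \sum_i v_i(s^{t^*}_i) \geq \weight_\epsilon(\vec o) \sum_i v_i(o_i)$. Step~2 of the algorithm returns the best of $\weight_\epsilon(\vec{s^t}) \cdot \vec{s^t}$ over $t \in \{0, \ldots, \floor{\log m}\}$, which in particular dominates $\weight_\epsilon(\vec o) \cdot \vec o$. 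Since the returned distribution itself lies in the structured range, it must be welfare-maximizing.

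Nothing in this argument is difficult: the content of Lemma~\ref{lemma-optimize} was precisely to reduce the search over the exponentially many allocations in the structured range to the $O(\log m)$ ``level sets'' indexed by $t$, each of which is handled by the corresponding iteration of the loop. The present lemma merely records that the algorithm performs this enumeration faithfully, so optimality is immediate.
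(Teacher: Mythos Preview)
Your proof is correct and follows exactly the approach the paper has in mind: the paper states only that the lemma is ``an immediate corollary of Lemma~\ref{lemma-optimize} (setting $s=m$ in the statement of the lemma)'', and your write-up simply unpacks that corollary. Your explicit remark that ``welfare'' in the inner loop must be read as the expected welfare $\weight_\epsilon(\vec s)\sum_i v_i(s_i)$ is the right reading and is needed for the argument to go through; with that interpretation, the enumeration over $t$ together with Lemma~\ref{lemma-optimize} indeed exhausts the structured range.
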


Using Lemma \ref{lemma-approx} we now have:
\begin{theorem}
There is a $(1-\epsilon)$-approximation algorithm for multi-unit
auctions that is truthful in expectation and makes $\poly\lt(n,\log m,\frac 1 \epsilon,\log \max_i\frac{v_i(m)} {v_i(1)} \rt)$ value queries.
\end{theorem}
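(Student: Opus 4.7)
The theorem is essentially a composition of the three preceding results, so the plan is just to assemble them carefully and verify that the algorithm of Figure \ref{algwarmup} has all three required properties: it is MIDR (hence truthful in expectation), it achieves expected welfare within a $(1-\epsilon)$ factor of optimal, and its query complexity matches the stated bound.

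First I would argue truthfulness. By the immediately preceding lemma, the algorithm returns a welfare-maximizing distribution in the $\epsilon$-structured range $\mathcal R = \{\weight_\epsilon(\vec s)\cdot \vec s\}_{\vec s \in S}$. Since this range depends only on $n$ and $m$ (not on the reported valuations), and the algorithm picks the element of $\mathcal R$ that maximizes expected welfare under the reported valuations, $f$ is maximal in distributional range by definition. Applying Part 2 of Proposition 2.7, there exist VCG-style payments $p$ such that $(f,p)$ is truthful in expectation.

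Next I would bound the approximation ratio by a direct appeal to Lemma \ref{lemma-approx}: any algorithm that optimizes over an $\epsilon$-structured range achieves expected welfare at least $(1-\epsilon)\cdot \text{OPT}$, where OPT is the optimal unweighted welfare. Combined with the previous paragraph, this gives a truthful in expectation $(1-\epsilon)$-approximation. The query complexity bound is exactly the content of the first of the two lemmas preceding the theorem, so I would simply cite it.

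There is essentially no obstacle here: the real content of the section lives in Lemmas \ref{lemma-approx} and \ref{lemma-optimize} (which justify that optimizing over the structured range is both a good approximation and tractable in the neighborhood-of-breakpoints sense) together with Proposition 2.7 (which supplies the payments). The only thing worth double-checking in the writeup is that the algorithm's output really is the argmax over the entire structured range $\mathcal R$, not merely over a subset: this is where Lemma \ref{lemma-optimize} is used, since it certifies that restricting attention to allocations whose coordinates lie in breakpoint neighborhoods (for some common $t$) loses nothing, so the enumeration in Step 1 of Figure \ref{algwarmup} suffices to find the true maximizer in $\mathcal R$.
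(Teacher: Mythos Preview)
Your proposal is correct and matches the paper's own argument, which is even more terse: the paper simply writes ``Using Lemma \ref{lemma-approx} we now have:'' and states the theorem, relying implicitly on the two preceding lemmas (query bound and exact optimization over the structured range) together with the MIDR/VCG machinery for truthfulness. Your write-up spells out exactly these three ingredients and correctly flags that Lemma \ref{lemma-optimize} is what guarantees the enumeration in Step 1 finds the true maximizer over all of $\mathcal R$; the only nit is that the proposition you cite for payments is numbered 2.6 in the paper, not 2.7.
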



\subsection{The Main Result: An FPTAS for a General Number of Bidders}
This section presents a computationally-efficient FPTAS for a general number of bidders. We use structured ranges as a basic building block.

Let $u$ and $v$ be valuations of two bidders. Informally speaking,
define a ``meta bidder'' with valuation $w$ as follows: the value
$w(s)$ is set to the optimal expected welfare of allocating at most
$s$ items in a structured range for $m$ items and the two
bidders. Note that, by lemma \ref{lemma-optimize} and a
straightforward modification of the algorithm in the previous section,
each query to $w$ can be calculated in time $\poly\lt(\log m,\frac 1
\epsilon \rt)$. Apply this construction recursively, combining two meta bidders into one ``larger'' meta bidder (see Figure \ref{fig:tree}). We end up with two meta-bidders at the top level, a case solved in the previous section.  We will show that this algorithm runs in time $\poly\lt(n,\log m,\frac 1 \epsilon\rt)$, and that the solution obtained well-approximates the optimal welfare. We will also show that this algorithm optimizes over some range of distributions, yielding truthfulness in expectation. Throughout this section we assume, without loss of generality, that $n$ is a power of $2$ (as we can always add bidders that have a value of $0$ for each bundle).

\subsubsection{The Range}

Before defining the range of the algorithm, we define a binary hierarchal division of the bidders, illustrated in Figure \ref{fig:tree}. We think of each node in the hierarchy as a meta-bidder representing a subset of the bidders. The root meta-bidder, representing $\set{1,\ldots,n}$, is composed of two meta-bidders representing $\set{1,\ldots,\frac{n}{2}}$ and $\set{\frac{n}{2}+1, \ldots, n}$, respectively. Applying the division recursively, meta-bidder $\set{a,\ldots,b}$ has two children: $\set{a,\ldots,\frac{a+b -1}{2}}$ is the left child and $\set{\frac{a+b-1}{2} +1, \ldots, b}$ is the right child. The leaves correspond to single bidders. Number the tree levels from $1$ to $\log n$, where the leaves are in level $\log n$, and the top vertex is in level $1$. The following definitions will prove helpful in defining the range of the algorithm $\R$.
\begin{definition}\label{def:composition}
Let $A$ and $B$ be two disjoint set of bidders. Let $D_A$ and $D_B$ be two distributions over allocations to bidders in $A$ and $B$, respectively. Let $w\in[0,1]$. A \emph{weighted composition} $w\cdot (D_A\oplus D_B)$ is the following distribution over allocations to bidders in $A\cup B$: with probability $w$, allocate to bidders in $A$ according to $D_A$ and to bidders in $B$ according to $D_B$, and with probability $(1-w)$ allocate no items to all bidders in $A\cup B$.
\end{definition}

\begin{definition}\label{def:components}
The set of \emph{components} of a weighted composition distribution $w\cdot (D_A\oplus D_B)$ includes $D_A$ and $D_B$. If $D_A$ and $D_B$ are weighted composition distributions then the components of $D_A$ and $D_B$ are also contained in the set of components of $w\cdot (D_A\oplus D_B)$.
\end{definition}

Let $\epsilon'=\frac {\ln \frac {1} {1-\epsilon}} {\log n}$. In the remainder of this section, every reference to a structured range refers to an $\epsilon'$-structured range on $m$ items and two bidders, though we often refer to a subset of the structured range that allocates at most $k$ items, for some $k\leq m$. For each vertex $T$ in the hierarchical tree with two leaf children $l$ and $r$, let $\mathcal R^k_{T}$ be the set of all weighted allocations of at most $k$ items in a $\epsilon'$-structured range of bidders $l$ and $r$. For a vertex $T$ with two non-leaf children $L$ and $R$, let
\begin{align*}
\mathcal R^k_{T}=\{&\weight_{\epsilon'}(a,b)\cdot (D_L\oplus D_R) 
\ | \ a+b\leq k, D_L\in \mathcal R^{a}_L, D_R\in \mathcal R^{b}_R\}
\end{align*}
Now let the range of the algorithm be $\mathcal R=\mathcal R^m_{N}$.

Consider some distribution $D\in \mathcal R_U^t$ (for some $U$ and $t$). Let $T$ be a descendant of $U$ in the hierarchical tree. There is a unique $k\leq t$ such that some $D_T^k \in \R^k_T$ is a component of $D$. The distribution $D_T^k$ is said to be \emph{induced} by $D$ on $T$, and $k$ is said to be the number of items \emph{associated} with $T$ in $D$. Let $s_i$ be the number of items associated with bidder $i$ in $D$. The (feasible) allocation $(s_1,\ldots,s_n)$ is called the \emph{base allocation} of $D$. Notice that for every allocation $\vec s$ there is at least one allocation $D\in\mathcal R$ such that $\vec s$ is the base allocation of $D$. Moreover, distribution $D$ always allocates either $s_i$ items or $0$ items to bidder $i$.

\begin{figure}[center,h]
\centerline{\includegraphics[scale=0.25]{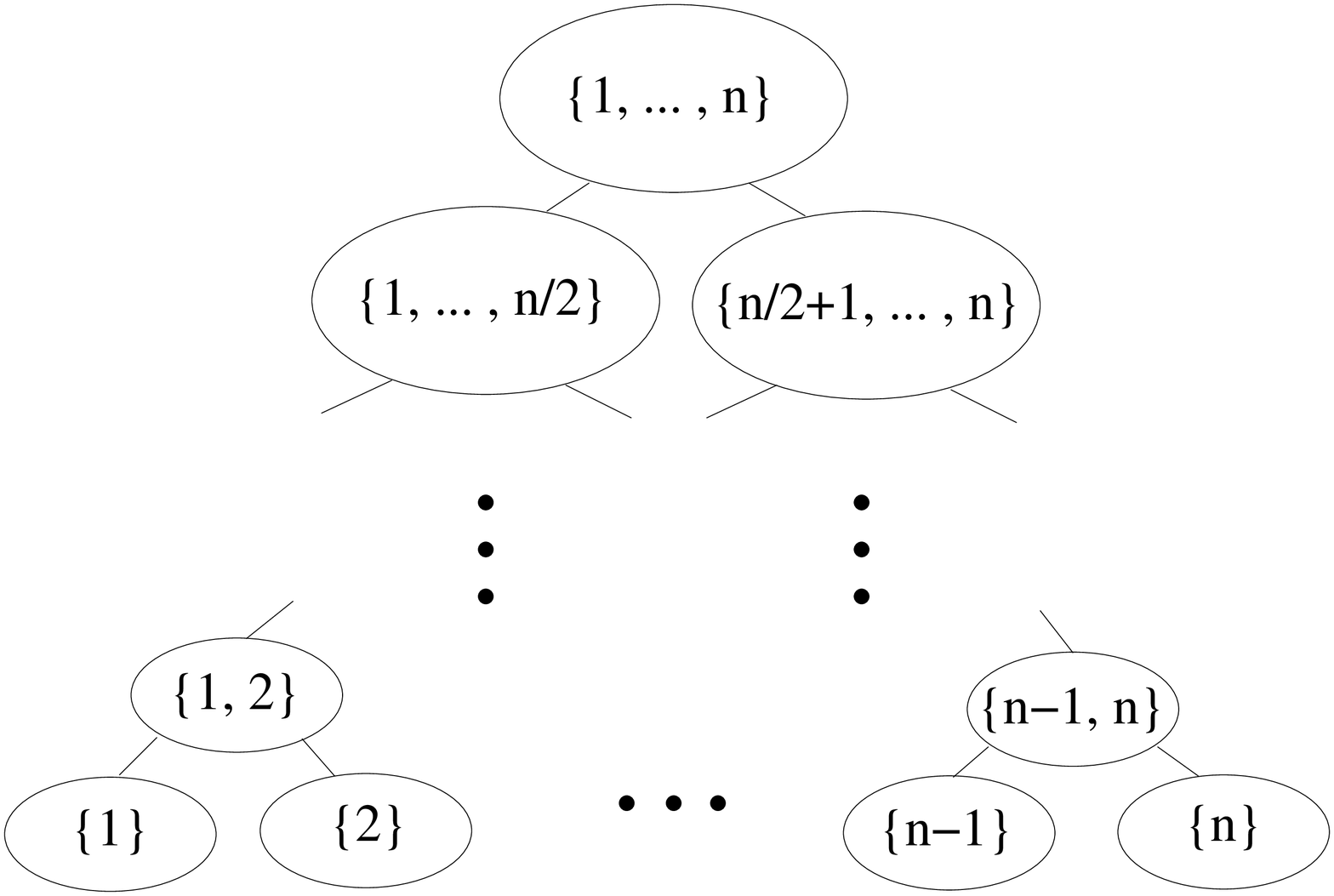}}
  \caption{Meta-Bidder Hierarchy}
  \label{fig:tree}
\end{figure}

An algorithm that optimizes over $\mathcal R$ is truthful in expectation. It also provides a good approximation:
\begin{lemma}
An algorithm that optimizes over $\mathcal R$ provides an approximation ratio of at least $1-\epsilon$.
\end{lemma}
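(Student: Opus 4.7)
The plan is to embed the optimum allocation into $\mathcal R$ at only a small loss, and then invoke the fact that the algorithm optimizes over $\mathcal R$. Fix any welfare-maximizing unrestricted allocation $\vec o^{*} = (o^{*}_{1}, \ldots, o^{*}_{n})$; I will exhibit a single distribution $D^{*} \in \mathcal R$ whose base allocation is $\vec o^{*}$ and whose expected welfare is at least $(1-\epsilon)\sum_{i} v_{i}(o^{*}_{i})$. Since the algorithm returns the distribution in $\mathcal R$ of maximum expected welfare, this is enough.

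I construct $D^{*}$ bottom-up on the hierarchical tree, forcing the base allocation to be $\vec o^{*}$ at every level. At each bottom-level internal vertex $T$ with leaf children $l, r$, take $D_{T} := \weight_{\epsilon'}(o^{*}_{l}, o^{*}_{r}) \cdot (o^{*}_{l}, o^{*}_{r}) \in \mathcal R^{o^{*}_{l}+o^{*}_{r}}_{T}$, i.e.\ the weighted allocation in the structured range whose base allocation is $(o^{*}_{l}, o^{*}_{r})$. At each higher internal vertex $T$ with non-leaf children $L, R$, let $a_{L} = \sum_{i \in L} o^{*}_{i}$ and $a_{R} = \sum_{i \in R} o^{*}_{i}$, and set
\[
D_{T} \;:=\; \weight_{\epsilon'}(a_{L}, a_{R}) \cdot (D_{L} \oplus D_{R}) \;\in\; \mathcal R^{a_{L}+a_{R}}_{T}.
\]
Taking $D^{*} := D_{\text{root}}$ produces a distribution in $\mathcal R$ whose base allocation is $\vec o^{*}$ by construction.

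Next I bound the expected welfare of $D^{*}$ by unrolling the recursion. The form of weighted composition implies that, under $D^{*}$, each bidder $i$ receives either exactly $o^{*}_{i}$ items or zero items, and the probability of receiving $o^{*}_{i}$ equals the product of the weight factors $\weight_{\epsilon'}(\cdot,\cdot)$ attached to the internal vertices on the unique root-to-leaf path to leaf $i$. Each of these factors is at least $1-\epsilon'$, and there are $\log n$ of them (one structured-range weight at the bottom and one composition weight at each higher level), so the probability is at least $(1-\epsilon')^{\log n}$. Linearity of expectation then gives
\[
\Ex[\text{welfare}(D^{*})] \;\geq\; (1-\epsilon')^{\log n} \sum_{i} v_{i}(o^{*}_{i}).
\]

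The final step is the single-variable inequality $(1-\epsilon')^{\log n} \geq 1-\epsilon$, which follows from the choice $\epsilon' = \frac{\ln(1/(1-\epsilon))}{\log n}$ via a standard elementary estimate (Bernoulli-type, applied after rewriting $\epsilon'$ in terms of $\epsilon$). The conceptual heart of the proof, and the only mildly subtle step, is the careful unfolding of the recursive weighted-composition structure to verify that the marginal probability that bidder $i$ is served under $D^{*}$ really is the claimed product of $\log n$ weights; everything else is routine bookkeeping and the final numerical calculation.
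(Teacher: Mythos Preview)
Your proposal is correct and follows essentially the same approach as the paper: both exhibit a distribution in $\mathcal R$ whose base allocation is the optimal allocation, observe that the probability bidder $i$ receives $o^*_i$ is a product of $\log n$ weights each at least $1-\epsilon'$, and conclude via the inequality $(1-\epsilon')^{\log n}\ge 1-\epsilon$. The only cosmetic difference is that the paper phrases the key step as a general claim about \emph{every} $D\in\mathcal R$ and its base allocation, whereas you construct the specific $D^*$ explicitly; the underlying computation is identical.
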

\begin{proof}
The lemma follows from the following claim:
\begin{claim}
Let $D\in \mathcal R$. Let $\vec s$ be its base allocation. $\Pr_D[\hbox{bidder $i$ receives $s_i$}]\geq 1-\epsilon$.
\end{claim}
\begin{proof}
If $s_i= 0$ the claim is true. Otherwise, for a non-leaf vertex $T$ with children $L$ and $R$, let $w_T\cdot (D_L\oplus D_R)$ denote the weighted composition distribution induced by $D$ on $T$. $w_T$ is the probability that the child vertices of $T$ receive a non-empty bundle. By definition, $w_T \geq 1-\epsilon'$. The probability that a bidder $i$ receives $s_i$ is the probability that all non-leaf vertices containing $i$ ``allocate'' items to their children. Let $P$ be the set of all non-leaf vertices on the path in the hierarchal tree from the leaf $\set{i}$ to the root $N$. We can bound this probability as follows: \[\prod_{T\in P}w_T\geq (1-\epsilon')^{\log n}=\lt(1-\frac {\ln \frac {1} {1-\epsilon}} {\log n}\rt)^{\log n} \geq 1-\epsilon\]
\end{proof}

Observe that the optimal (unweighted) allocation $(o_1,\ldots,o_n)$ is induced by some $D\in \mathcal R$. Hence, each bidder $i$ receives $o_i$ with probability at least $(1-\epsilon)$ in $D$. The expected welfare of $D$ is therefore at least $(1-\epsilon)$ times the welfare of the optimal unweighted allocation. The algorithm optimizes over all distributions in the the range $\R$, and thus returns an allocation with at least that expected welfare.
\end{proof}

We now make the meta-bidder intuition formal.

\begin{definition}
Let $\mathcal D$ be a range of distributions of $m$ items to a set $U$ of bidders. Let $\mathcal D^k$ be the set of all distributions in $\mathcal D$ that never allocate more than $k$ items. Let $S^U_k$ be the set of all allocations that allocate at most $k$ items to bidders in $U$. Define $v$ to be the valuation of a \emph{meta-bidder} over $U$ using $D$ as follows:
$$v(k)=\max_{D\in\mathcal D^k}\sum_{\vec s\in S^U_k}\Pr_D[\vec s \hbox { is chosen}]\cdot \sum_{i\in U} v_i(s_i)$$
\end{definition}
Note that a valuation of a meta bidder is monotone and normalized. We now define a meta-bidder for each vertex $T$. The definition is recursive. For a vertex $T$ with two leaf children $l$ and $r$, let $v_T$ be the valuation of the meta bidder $T$ defined over $\set{l,r}$ using the range of distributions $\mathcal D=\cup_k\mathcal R^k_T$ (i.e., an $\epsilon'$-structured range). For a vertex $T$ with two meta-bidder children $L$ and $R$, let $v_T$ be the valuation of the meta-bidder $T$ defined over $\set{L,R}$ using the the $\epsilon'$-structured range of allocations to (meta-) bidders $L$ and $R$. The meta-bidder idea is powerful for precisely the following reason:

\begin{lemma}\label{lemma-meta-bidders}
Let $O$ be a welfare maximizing distribution in $\mathcal R^s_T$. Let $O_U$ be the distribution of allocations to bidders in $U$ induced by $O$ on $U$, where $U$ is a descendant of $T$ in the hierarchical tree or $U=T$. Denote the expected welfare of $O_U$ by $|U|$. Let $o_U$ be the bundle that is associated with $U$ in $O$. Then, $v_U(o_U)=|U|$.
\end{lemma}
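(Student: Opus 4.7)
The plan is to prove $v_U(o_U)=|U|$ by combining a reformulation of the recursive meta-bidder valuation with the global welfare-optimality of $O$, via a local exchange argument. First I would establish the identity
\[
v_W(k)\;=\;\max_{D\in\mathcal R^k_W}\,\ex_D\lt[\sum_{i\in W}v_i(s_i)\rt]
\]
for every non-leaf vertex $W$ of the hierarchical tree and every $k$, by induction on the height of $W$. The base case ($W$ has two leaf children) is immediate, since $\mathcal R^k_W$ is the $\epsilon'$-structured range on those two leaves restricted to at most $k$ items, which is precisely the range over which $v_W$ is defined. For the inductive step, every $D\in\mathcal R^k_W$ decomposes as $\weight_{\epsilon'}(a,b)\cdot(D_L\oplus D_R)$ with $a+b\le k$, $D_L\in\mathcal R^a_L$, $D_R\in\mathcal R^b_R$; the expected welfare factors multiplicatively, and maximizing first over $D_L,D_R$ (applying the inductive hypothesis to $L$ and $R$) and then over $(a,b)$ recovers the recursive meta-bidder formula $v_W(k)=\max_{a+b\le k}\weight_{\epsilon'}(a,b)\cdot[v_L(a)+v_R(b)]$.

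With this identity in hand, I would apply it with $W=U$ and conclude via a substitution argument. Trivially $|U|\le v_U(o_U)$ since $O_U\in\mathcal R^{o_U}_U$. For the reverse direction, suppose for contradiction that some $D^{*}\in\mathcal R^{o_U}_U$ achieves strictly greater expected welfare than $O_U$. I would form $O'$ from $O$ by replacing only the component $O_U$ by $D^{*}$, keeping every other sub-distribution and every composition weight untouched. Because $O_U$ and $D^{*}$ both lie in $\mathcal R^{o_U}_U$, the item count $o_U$ associated with $U$ is preserved, so every composition weight $\weight_{\epsilon'}(\cdot,\cdot)$ on the path from $T$ down to $U$ remains unchanged, and hence $O'\in\mathcal R^s_T$. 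Expected welfare factors multiplicatively through compositions, so the total welfare of $O'$ exceeds that of $O$ by $p\cdot\lt(\ex_{D^{*}}[\sum_{i\in U}v_i(s_i)]-\ex_{O_U}[\sum_{i\in U}v_i(s_i)]\rt)$, where $p$ is the product of composition weights on the $T$-to-$U$ path; each such weight is at least $1-\epsilon'>0$, so this difference is strictly positive, contradicting the optimality of $O$ in $\mathcal R^s_T$. Combining the two inequalities yields $|U|=v_U(o_U)$.

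I expect the main point requiring care to be the substitution step: one must verify that membership in $\mathcal R^k_W$ depends only on the item-count parameters at each composition, and not on the specific sub-distributions chosen beneath them, so that swapping $O_U$ for any element of $\mathcal R^{o_U}_U$ yields a bona fide element of $\mathcal R^s_T$ with the same outer weights. The identity step of the first paragraph is essentially bookkeeping, and once the substitution argument is clean the chain of inequalities closes immediately.
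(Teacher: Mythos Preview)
Your proposal is correct and follows essentially the same approach as the paper. The paper's proof is a single bottom-up induction on the level of $U$ that interleaves your two ingredients---the identity $v_W(k)=\max_{D\in\mathcal R^k_W}\ex_D[\sum_{i\in W}v_i(s_i)]$ and the local-optimality-via-substitution argument---at each step, whereas you separate them cleanly into two passes; but the underlying reasoning (factor expected welfare through the composition, apply the inductive hypothesis to the children, and use global optimality of $O$ to force $O_U$ to be optimal in $\mathcal R^{o_U}_U$) is identical, and your explicit verification that swapping $O_U$ for $D^*$ preserves the outer composition weights is exactly the point the paper leaves implicit.
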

\begin{proof}
The proof is by induction on the level of $U$. For a vertex $U$ with two leaf children $l$ and $r$, $v_U(o_U)$ equals the value of the welfare maximizing solution that allocates at most $o_U$ items in a structured range. On the other hand, observe that $O$ induces a weighted allocation that allocates at most $o_U$ items in a structured range, and that $O$ is welfare maximizing. Hence, the expected welfare of $O_U$ equals $v_U(o_U)$.

Consider a vertex $U$ with two leaf children $L$ and $R$, and let
$O_U=\weight_{\epsilon'}(a,b)\cdot (O_L\oplus O_R)\in \mathcal
R_U^{o_U}$. The expected welfare of $O_U$ is by induction
$\weight_{\epsilon'}(l,r)\cdot (v_L(a) + v_R(b))$, i.e., $O_U$ is the welfare maximizing weighted allocation in a structured range with bidders $L$ and $R$ that allocates at most $o_U$ items, which is by definition the value of $v_U(o_U)$.
\end{proof}

Fix $\delta=\frac {\ln \frac 1 {1-\epsilon'}} {2\log m+2}$. The important property of distributions in the range is that, similar to structured ranges, the bundle $k$ that is associated with a vertex is in the neighborhood of a breakpoint of the corresponding (meta-) bidder.

\begin{lemma}\label{lemma-step-point}
Let $O$ be a welfare maximizing distribution in $\mathcal R^a_T$. Let $L$ and $R$ be two sibling vertices that are descendants of $O$. Let $o_L$ and $o_R$ be the bundles associated with $L$ and $R$ in $O$, respectively. Let $t$ be the maximal non-negative integer such that $o_L$ and $o_R$ are multiples of $2^t$. Then, $o_L$ and $o_R$ are in the neighborhood of a $2^t$-breakpoint of $v_L$ and $v_R$, respectively.
\end{lemma}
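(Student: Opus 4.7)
The plan is to reduce this lemma to Lemma \ref{lemma-optimize} applied at the parent vertex of $L$ and $R$, viewing the two meta-bidders $L$ and $R$ as an instance of a two-bidder structured range optimization problem.

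First I would let $P$ denote the parent vertex of $L$ and $R$ in the hierarchical tree, and let $o_P$ be the bundle associated with $P$ in $O$. By unwinding Definition \ref{def:composition} and the recursive definition of $\mathcal R^k_T$, the distribution $O_P$ induced by $O$ on $P$ has the form $\weight_{\epsilon'}(o_L,o_R)\cdot(O_L\oplus O_R)$ for some $O_L\in\mathcal R^{o_L}_L$ and $O_R\in\mathcal R^{o_R}_R$ with $o_L+o_R\leq o_P$. The key invocation is Lemma \ref{lemma-meta-bidders}: applied once, it says $O_P$ must be a welfare-maximizing distribution in $\mathcal R^{o_P}_P$; applied to $L$ and $R$ individually, it tells us that the expected welfare of $O_L$ and $O_R$ equal $v_L(o_L)$ and $v_R(o_R)$, respectively. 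Consequently, the expected welfare of $O_P$ is exactly $\weight_{\epsilon'}(o_L,o_R)\cdot(v_L(o_L)+v_R(o_R))$.

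Next I would observe that since $O_P$ maximizes expected welfare among distributions in $\mathcal R^{o_P}_P$, the pair $(o_L,o_R)$ must maximize $\weight_{\epsilon'}(s_1,s_2)\cdot(v_L(s_1)+v_R(s_2))$ over all $s_1+s_2\leq o_P$ (for any other $(s_1,s_2)$, the definition of $v_L$ and $v_R$ as meta-bidder valuations guarantees the existence of distributions in $\mathcal R^{s_1}_L$ and $\mathcal R^{s_2}_R$ achieving those values, which could be recomposed into a competing element of $\mathcal R^{o_P}_P$). This is precisely the hypothesis of Lemma \ref{lemma-optimize} applied to the two-bidder $\epsilon'$-structured range with valuations $v_L,v_R$ and cap $s=o_P$. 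Applying that lemma directly yields that $o_L$ lies in the neighborhood of a $2^t$-breakpoint of $v_L$, and $o_R$ lies in the neighborhood of a $2^t$-breakpoint of $v_R$.

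The main thing to justify carefully is that Lemma \ref{lemma-optimize} genuinely applies in the meta-bidder setting: its proof only uses monotonicity and normalization of the valuations, and the observation (noted right after the meta-bidder definition) that meta-bidder valuations are monotone and normalized makes this immediate. The only mild subtlety is ensuring that the ``competing allocation'' constructed inside the proof of Lemma \ref{lemma-optimize} (the one built by rounding to nearest breakpoints and adjusting parities) is actually realizable within $\mathcal R^{o_P}_P$, which follows because for any target bundle sizes $a_L,a_R$ the meta-bidder definition guarantees distributions $D_L\in\mathcal R^{a_L}_L$ and $D_R\in\mathcal R^{a_R}_R$ attaining $v_L(a_L)$ and $v_R(a_R)$, so $\weight_{\epsilon'}(a_L,a_R)\cdot(D_L\oplus D_R)$ is a legitimate element of $\mathcal R^{o_P}_P$ whose expected welfare is what Lemma \ref{lemma-optimize}'s calculation demands.
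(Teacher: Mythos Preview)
Your proposal is correct and follows essentially the same approach as the paper: introduce the parent $P$, use Lemma~\ref{lemma-meta-bidders} to identify $(o_L,o_R)$ as the welfare-maximizing weighted allocation in the two-(meta-)bidder $\epsilon'$-structured range capped at $o_P$, and then invoke Lemma~\ref{lemma-optimize}. Your write-up is in fact more careful than the paper's three-sentence proof, as you explicitly justify why Lemma~\ref{lemma-optimize} applies to meta-bidder valuations (monotonicity and normalization) and why any competing allocation in that lemma's argument is realizable in $\mathcal R^{o_P}_P$; the paper leaves both of these implicit.
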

\begin{proof}
Let $P$ be the parent of $L$ and $R$, let its associated bundle in $O$ be $o_P$. By Lemma \ref{lemma-meta-bidders} and the definition of meta bidders, $v_P(o_P)$ is equal to the expected welfare of a distribution of the form $\weight_{\epsilon'}((o_L,o_R))\cdot (o_L,o_R)$ -- the welfare maximizing distribution that allocates at most $o_P$ items to bidders $L$ and $R$ in a structured range. By Lemma \ref{lemma-optimize}, $o_L$ and $o_R$ are in neighborhoods of $2^t$-breakpoints of $v_L$ and $v_R$, respectively.
\end{proof}

\subsubsection{The FPTAS}

The FPTAS we construct is the obvious one: find the optimal allocation to meta-bidders $\set {1,\ldots n/2}$ and $\set {n/2 +1,n}$ in a structured range, then proceed recursively to find the best allocation between bidders $\set{1,\ldots,\frac n 4}$ and $\set{\frac n 4+1,\ldots, \frac n 2}$, and so on. In a naive implementation of this algorithm every value query to the valuations of the two meta-bidders is calculated recursively ``on the fly''. However, a short calculation shows that in this implementation the number of value queries is polynomial in $(\log m)^{\log n},n,\frac 1 \epsilon$. To improve the running time, we use the fact that to calculate $v_T(s)$, for any bundle $s$, we only need to know the value of a relatively small number of bundles in the neighborhood of breakpoints of the two (meta-) bidders that are $T$'s children.

We are also interested in a ``strongly polynomial'' algorithm. I.e., the running time should be independent of $\log \frac {\max_iv_i(m)} {\min_iv_i(1)}$. Towards this end, define:
\begin{definition}
A bundle $s$ is an $r$-significant $b$-breakpoint of $v$ if it is a $b$-breakpoint of $v$ and if either $s=0$ or $v(s)\geq r$.
\end{definition}
By the next lemma, we can ``ignore'' insignificant breakpoints:
\begin{lemma}\label{lemma-significant}
Let $O$ be a welfare maximizing distribution in $\mathcal R_T^s$. Denote its expected welfare by $|O|$. Let $R,L$ be two sibling descendants of $T$ in level $l$. Let $o_L,o_R$ be the bundles associated with $L,R$ in $O$, respectively. Let $t>0$ be the largest integer s.t. both $o_L$ and $o_R$ are multiples of $2^t$. Then, $o_L$ and $o_R$ are in the neighborhood of an $r$-significant $2^t$-breakpoint of $v_L$ and $v_R$, respectively, for $r={ \delta}^l|O|$.
\end{lemma}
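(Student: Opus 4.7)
The plan is to argue by contradiction, extending the rounding idea of Lemma \ref{lemma-optimize}. Suppose without loss of generality that $o_L$ is not in the neighborhood of any $r$-significant $2^t$-breakpoint of $v_L$, where $r=\delta^l|O|$. Since the bundle $0$ is $r$-significant by definition and its neighborhood is $\set{0,2^t,\ldots,n\cdot 2^t}$, this assumption forces $o_L>n\cdot 2^t$. Let $\bar o_L$ denote the largest $2^t$-breakpoint of $v_L$ with $\bar o_L\le o_L$; by assumption $v_L(\bar o_L)<r$, and since $o_L$ lies strictly before the next $2^t$-breakpoint, the definition of breakpoints yields $v_L(o_L)<(1+\delta)\,v_L(\bar o_L)<(1+\delta)\,\delta^l|O|$.

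I will then construct a distribution $O'\in\mathcal R^s_T$ with strictly greater expected welfare than $O$, contradicting optimality. The primary modification is to zero out the subtree at $L$: replace the component of $O$ at $L$ with the trivial distribution that allocates nothing to any bidder in $L$. Because strictly fewer items are used, $O'$ still lies in $\mathcal R^s_T$. The welfare change decomposes as follows: (i) the subtree at $L$ no longer contributes, a loss of at most $W\cdot v_L(o_L)<W\cdot(1+\delta)\,\delta^l|O|$, where $W$ is the product of weights along the path from the root down to $P$ (the parent of $L$ and $R$, inclusive) in $O$; (ii) the weight at $P$, namely $\weight_{\epsilon'}(o_L,o_R)$, becomes $\weight_{\epsilon'}(0,o_R)$ in $O'$, larger by a factor of at least $1+2\delta$ when $o_R>0$ because exactly one meta-bidder is now zero and the $p$-factor of $\weight_{\epsilon'}$ activates; (iii) the bundle $o_P$ also decreases, so the weights at strict ancestors of $P$ can only increase, further boosting $O'$'s welfare.

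The main technical obstacle is showing the gain in (ii) outweighs the loss in (i). When $v_R(o_R)$ is substantial -- concretely $v_R(o_R)>(1+\delta)\,\delta^l|O|/(2\delta)$ -- the factor-$2\delta$ boost applied to $R$'s contribution already exceeds the loss. Otherwise, if $v_R(o_R)$ is itself negligible or $o_R=0$, the argument cascades: zero out the entire subtree rooted at $P$ and repeat the analysis at $Q$, $P$'s parent, harvesting the $(1+2\delta)$ weight gain at $Q$ from its own $p$-factor. The cumulative loss at most doubles per level ascended, so remains of order $2^{l-l'}\delta^l|O|$, while at some ancestor the surviving sibling subtree must contribute at least a constant fraction of $|O|$ (since subtree welfare contributions sum to $|O|$). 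Under the choice $\delta=\frac{\ln(1/(1-\epsilon'))}{2\log m+2}$, $\delta^l$ is small enough that the cumulative loss is dominated by the cumulative weight gain applied to the large sibling contribution, yielding the desired contradiction.
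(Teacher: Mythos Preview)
Your underlying improvement move---zero out a subtree and harvest the $(1+2\delta)$ bump in the $p$-factor of $\weight_{\epsilon'}$---is exactly the mechanism the paper uses. The difference is organizational: the paper first isolates a \emph{local} claim, namely that in an optimal two-bidder structured allocation with $o_i\neq 0$ one has $v_i(o_i)\ge \delta\cdot|o|$, and then propagates it down the tree by a one-line induction ($v_L(o_L)\ge\delta\, v_P(o_P)\ge\delta^2 v_{P'}(o_{P'})\ge\cdots\ge\delta^l|O|$). This avoids any multi-level cascading analysis entirely.

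Your global version, by contrast, has a genuine quantitative gap in the cascade. You declare the stopping threshold at level $l$ to be $v_R(o_R)>\frac{(1+\delta)\delta^l|O|}{2\delta}$, i.e.\ roughly $\tfrac12\delta^{l-1}|O|$; when you fail this and cascade, the new ``loss'' at $P$ is $v_P(o_P)\le v_L(o_L)+v_R(o_R)$, which is bounded by about $\delta^l|O|+\tfrac12\delta^{l-1}|O|$. That is a factor of roughly $1/(2\delta)$ larger than the loss at the previous level, not a factor of $2$. So the bound $2^{\,l-l'}\delta^l|O|$ you state is wrong (it undercounts the loss), and the concluding sentence comparing $\delta^l$ against the gain does not follow. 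The cascade \emph{can} be made to work---one shows inductively that the loss at level $l'$ stays below $(1+\delta)\delta^{l'}|O|$, and at the children of $T$ the sibling must exceed the threshold because the two children's values sum to at least $|O|$---but this is precisely the content of the paper's local claim plus induction, just rediscovered the hard way.

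A minor further point: your item (iii) about ancestor weights increasing is unnecessary and potentially misleading. You can (and should) keep the budgets $o_P,o_Q,\ldots$ at all strict ancestors of $P$ unchanged---the range definition allows slack---so only the weight at $P$ itself changes. I would recommend restructuring along the paper's lines: prove the one-level statement $v_i(o_i)\ge\delta|o|$ directly (your zeroing argument at $P$ does this), then induct.
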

\begin{proof}
From Lemma \ref{lemma-step-point} we know that $o_L$ and $o_R$ are in the neighborhood of $2^t$-breakpoints of $v_L$ and $v_R$. It remains to show that they are $r$-significant. We make use of the following claim:
\begin{claim}
Let $\weight_\epsilon(\vec o)\cdot \vec o$ be a welfare maximizing distribution of $s$ items in an $\epsilon$-structured range for $2$ bidders. Denote its expected welfare by $|o|$. For each $i\in T$ with $o_i\neq 0$ we have that $v_i(o_i)\geq \delta |o|$.
\end{claim}
\begin{proof}
The proof is almost identical to the proof of Lemma \ref{lemma-optimize}, and therefore it is only sketched. Suppose for contradiction that there is some bidder $k$ with $v_k(o_k)< \delta |o|$ and $o_k>0$. Let $t>0$ be the maximal integer s.t. each $o_i$ is a multiple of $2^t$. Define the following allocation $\vec a$: $a_k=0$, and for the other bidder $i\neq k$, let $a_i=o_i$ (notice that $\Sigma_ia_i\leq s$). All bidders but one receive no items, thus $\weight_{\epsilon}(\vec a)\geq (1+2\delta)\weight_{\epsilon}(\vec o)$, by definition of $\weight_{\epsilon}$. We have that $\Sigma_iv_i(a_i)\geq (1-\delta)\Sigma_iv_i(o_i)$. However, now $\weight_\epsilon(\vec a)\Sigma_iv_i(a_i)>\weight_\epsilon(\vec o)\Sigma_iv_i(o_i)$, a contradiction.
\end{proof}

We prove the lemma using the claim by induction on the level of the tree, starting with children of $T$ to the leaves. By Lemma \ref{lemma-meta-bidders}, $v_T(s)=|O|$. $v_T$ is a meta bidder, thus $v_T(s)$ is also equal to the value of a welfare maximizing allocation in a structured range: $w\cdot (o_L,o_R)$. By the claim $v_L(o_L)\geq \delta  |O|$. Similarly, $v_R(o_R)\geq \delta |O|$.

Assume correctness for level $l$ and prove for $l+1$. For $L$ and $R$ in level $l+1$ with parent $P$, from arguments similar to those above it follows that $v_L(o_L),v_R(o_R)\geq \frac \delta 2 v_P(o_P)$, where $o_L,o_R$, and $o_P$ are the bundles associated with $L,R$ and $P$ in $O$, respectively. By induction we get that: $v_R(o_R),v_L(o_L)\geq \delta  v_p(o_P)\geq {\delta }^{l+1}v_T(s)$, as needed.
\end{proof}

\begin{figure}[h]
\hrulefill
\begin{enumerate}
    \item For each vertex $T$ in level $l$ evaluate (recursively, from bottom to top) all bundles that are in the neighborhood of an $r_l$-significant $2^t$-breakpoint of $v_T$ (for $t=1,2\ldots,\floor{\log m}$, and $r_l=\delta^l\max_{i\in N}v_i(m)/2$)
    \item Evaluate (recursively, from top to bottom starting from the top vertex $N$ and $o_N=m$) for each vertex $T$ with children $L$ and $R$ a welfare maximizing weighted distribution $\weight_{\epsilon'}((o_L,o_R))\cdot (o_L,o_R)$ of (meta-) bidders $L$ and $R$, where $o_L+o_R\leq o_T$. Let $w_T=\weight_{\epsilon'}((o_L,o_R))$.
    \item Each bidder $i$ receives $o_i$ items with probability $\Pi_{T\in P} w_T$ where $P$ is the set that includes all the proper ancestors of $i$ in the hierarchal tree.
\end{enumerate}
\hrulefill
\caption{An FPTAS for a General Number of Bidders}
\label{alggeneral}
\end{figure}

\begin{theorem}\label{thm-general-fptas}
The algorithm in  Figure \ref{alggeneral} is a truthful-in-expectation FPTAS for multi-unit auctions.
\end{theorem}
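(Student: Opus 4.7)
The plan is to establish three things: (i) truthfulness in expectation, (ii) the $(1-\epsilon)$-approximation guarantee, and (iii) running time $\poly(n,\log m,1/\epsilon)$. Item (ii) is already in hand once correctness is established: the approximation lemma proved earlier says that any algorithm which optimizes over $\mathcal R$ achieves ratio $1-\epsilon$. Item (i) reduces to showing that the output distribution is exactly the welfare-maximizing element of the fixed range $\mathcal R$; truthfulness then follows from the MIDR proposition. So the real work is to show that the algorithm correctly optimizes over $\mathcal R$ and to bound its running time.

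For correctness, I would proceed recursively using Lemma~\ref{lemma-meta-bidders}. That lemma says the optimum $O^{*}\in\mathcal R$ decomposes, at each internal vertex $T$ with children $L,R$, as a weighted composition $\weight_{\epsilon'}(o_L,o_R)\cdot(D_L\oplus D_R)$ in which $(o_L,o_R)$ is itself a welfare-optimal allocation in the two-(meta-)bidder $\epsilon'$-structured range for $v_L,v_R$. Hence locating $O^{*}$ amounts to solving such a $2$-bidder structured-range problem at every node---exactly the task handled by steps~1--2 of Figure~\ref{alggeneral}. Invoking Lemma~\ref{lemma-optimize}, each two-bidder optimum has both coordinates in neighborhoods of $2^t$-breakpoints of the children's (meta-)valuations, and by Lemma~\ref{lemma-significant} applied at the root, these breakpoints may furthermore be taken to be $r$-significant with $r=\delta^{l}|O^{*}|$ at level $l$. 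Thus it suffices to verify that the algorithm's global threshold $r_l=\delta^l\max_i v_i(m)/2$ satisfies $r_l\le\delta^l|O^{*}|$, i.e.\ $|O^{*}|\ge\max_i v_i(m)/2$. This follows from the distribution in $\mathcal R$ that tries to allocate all $m$ items to an $\argmax_i v_i(m)$ bidder: by the claim inside the approximation lemma that bidder actually receives $m$ items with probability $\ge 1-\epsilon\ge 1/2$, so $|O^{*}|\ge\tfrac12\max_i v_i(m)$. The bottom-up dynamic program in step~1 therefore stores $v_T$ at every bundle ever consulted in the top-down recursion of step~2, and the top-down recursion mirrors the decomposition of Lemma~\ref{lemma-meta-bidders} exactly.

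For running time, the key quantitative estimate is the number of bundles evaluated per meta-bidder. Since $v_T(m)\le n\cdot\max_i v_i(m)$, the count of $r_l$-significant $2^t$-breakpoints of $v_T$ is $O\!\bigl(\log_{1+\delta}(v_T(m)/r_l)\bigr)=O\!\bigl((\log n+l\log(1/\delta))/\delta\bigr)$, which is $\poly(n,\log m,1/\epsilon)$ because $1/\delta=O(\log m\cdot\log n/\epsilon)$. Multiplying by $O(\log m)$ choices of $t$, $O(n)$ bundles per neighborhood, and $O(n)$ meta-bidders, and noting that a single meta-valuation query is itself a two-bidder structured-range optimization costing $\poly(\log m,1/\epsilon)$ against already-stored children's values, yields the $\poly(n,\log m,1/\epsilon)$ bound. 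Breakpoints themselves are located by binary search on the (polynomially many) already-evaluated values, a polylogarithmic overhead.

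The main technical obstacle, as I see it, is the consistency of the dynamic program across levels: one must verify that the single level-dependent threshold $r_l$---defined only in terms of $\max_i v_i(m)$ rather than the (unknown) subproblem optimum---is simultaneously small enough that every bundle relevant to the top-down phase at every vertex of level $l$ is actually precomputed in the bottom-up phase, and large enough to keep the number of significant breakpoints polynomial. Both directions rely on sandwiching $v_T(m)$ between $\max_i v_i(m)$ and $n\cdot\max_i v_i(m)$, and on the lower bound $|O^{*}|\ge\tfrac12\max_i v_i(m)$ derived above. Once this bookkeeping is discharged, combining the MIDR proposition with the approximation lemma and the running-time estimate completes the proof.
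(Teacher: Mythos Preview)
Your proposal is correct and follows the same route as the paper: establish that the algorithm optimizes over $\mathcal R$ using Lemmas~\ref{lemma-meta-bidders} and~\ref{lemma-significant} together with the sandwich $\tfrac12\max_i v_i(m)\le |O|\le n\max_i v_i(m)$, then count significant breakpoints for the running time. The paper makes one step slightly more explicit than you do---a bottom-up inductive claim that evaluating $v_T$ at its $r_l$-significant breakpoints only consults the children's values at \emph{their} $r_{l+1}$-significant breakpoints (so step~1 is internally self-consistent, not merely sufficient for the top-down pass in step~2)---but you correctly flag exactly this consistency issue as the crux and have all the ingredients to discharge it.
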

Before proving the theorem, we comment on the definition of $r_l$ in the algorithm. Let $|O|$ be the expected welfare of $O$. On one hand, observe that $|O|\geq \frac {\max_i v_i(m)} 2$ (allocating all items to bidder $i$ that maximizes $v_i(m)$ is an allocation that is induced by some $D\in \mathcal R$. $D$ has an expected welfare of at least $(1-\epsilon)v_i(m)$), and on the other hand $|O|\leq n\max_i v_i(m)$. In other words, $r_l$ is on one hand small enough so we can calculate the significant breakpoints of the valuations according to Lemma \ref{lemma-significant}, and on the other hand is not too small, so we do not have too many significant points to evaluate. The proofs below make this discussion formal.

\begin{lemma}
The algorithm finds a welfare maximizing distribution $O\in \mathcal R$.
\end{lemma}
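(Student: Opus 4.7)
The plan is to argue by induction on the tree levels (from leaves to the root for valuation computation, then from the root to the leaves for distribution construction) that the algorithm exactly finds the maximizer of expected welfare over $\mathcal R$. The semantic bridge is the meta-bidder construction: by Lemma~\ref{lemma-meta-bidders}, optimizing over $\mathcal R^k_T$ is equivalent to choosing the welfare-maximizing weighted structured allocation $\weight_{\epsilon'}((a,b))\cdot(a,b)$ of $k$ items between the children meta-bidders $L$ and $R$, where the welfare of $a$ items at $L$ is $v_L(a)$ and similarly for $R$. Thus the top-down decomposition in Step 2 of Figure~\ref{alggeneral} literally matches the recursive definition of $\mathcal R$, provided we have access to correct values of $v_T$ at enough bundles.

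First, I would establish that after Step 1 the algorithm knows $v_T(s)$ exactly for every bundle $s$ that can possibly arise in any bundle associated with $T$ under an optimal $O\in\mathcal R$. By Lemma~\ref{lemma-step-point}, such an $s$ lies in the neighborhood of a $2^t$-breakpoint of $v_T$ for some $t$; by Lemma~\ref{lemma-significant}, it lies in the neighborhood of an $r_l$-significant $2^t$-breakpoint, for $r_l=\delta^l\max_i v_i(m)/2$. Here the choice of $r_l$ is justified by the two-sided bound $\tfrac{\max_i v_i(m)}{2}\le|O|\le n\max_i v_i(m)$ noted before the lemma, so that the significance threshold used by the algorithm is no larger than the threshold guaranteed by Lemma~\ref{lemma-significant}. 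Computing $v_T$ at such a bundle $s$ reduces, by definition, to optimizing over the two-bidder structured range on $(L,R)$ with item cap $s$; by Lemma~\ref{lemma-optimize} this optimum is attained by a pair of bundles in the neighborhoods of $2^{t'}$-breakpoints of $v_L$ and $v_R$ for some $t'$, and the induction hypothesis guarantees those values have already been computed at level $l+1$. Hence Step 1 correctly fills in $v_T$ at every bundle the top-down phase will ever query.

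Second, I would analyze Step 2 by induction from the root downward. At the root with cap $o_N=m$, since we have correct values of $v_L$ and $v_R$ at every relevant bundle, the optimization returns the truly welfare-maximizing pair $(o_L,o_R)$ in an $\epsilon'$-structured range; by Lemma~\ref{lemma-meta-bidders} this equals $\max_{D\in\mathcal R^m_N}\mathbb E[\text{welfare of }D]$. Passing to the children, the subproblem faced at $L$ is precisely to find the welfare-maximizing distribution in $\mathcal R^{o_L}_L$, and the same argument applies recursively. Combining with Step 3, which merely unfolds the weighted compositions into a single distribution via Definitions~\ref{def:composition} and~\ref{def:components}, we conclude that the distribution output by the algorithm equals an element of $\mathcal R$ with maximum expected welfare.

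The main obstacle is the interplay between the two induction directions: the top-down optimization at level $l$ queries $v_T$ only at certain bundles, and those bundles must all lie in the set of significant breakpoints enumerated in Step 1. Pinning this down requires carefully matching the threshold $r_l$ used by the algorithm against the threshold produced by Lemma~\ref{lemma-significant} (which is stated in terms of $|O|$ rather than $\max_i v_i(m)$), and verifying that the Lemma~\ref{lemma-step-point} characterization of optimal bundles propagates through each meta-bidder level without losing optimality. Once this alignment is checked, the rest of the proof is bookkeeping on the recursion.
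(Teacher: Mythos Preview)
Your proposal is correct and follows essentially the same route as the paper: a bottom-up induction (the paper packages this as a separate claim) showing that Step~1 computes $v_T$ correctly at all $r_l$-significant breakpoint neighborhoods via Lemmas~\ref{lemma-optimize} and~\ref{lemma-significant}, followed by a top-down induction using Lemma~\ref{lemma-meta-bidders} to show Step~2 recovers a maximizer in $\mathcal R$. The obstacle you flag---aligning the algorithm's threshold $r_l=\delta^l\max_i v_i(m)/2$ with the $\delta^l|O|$ threshold of Lemma~\ref{lemma-significant} via the bound $|O|\ge \max_i v_i(m)/2$---is exactly the point the paper addresses in the remark preceding the proof.
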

\begin{proof}
We prove the following claim first:
\begin{claim}
Let $T$ be a vertex in level $l$. The $\delta ^l|O|$-significant $2^t$-breakpoint are evaluated correctly (for $t=1,2,...,\floor{\log n}$).
\end{claim}
\begin{proof}
The proof is by induction on the level $l$, starting from the leaves to the top. For $T$ with two leaf children $L$ and $R$, the claim is trivially true, since $|O|\geq \frac {\max_i v_i(m)} 2$ and by Lemma \ref{lemma-significant}.

We assume correctness for $l+1$ and prove for $l$. Consider vertex $T$ in level $l$ with meta-bidder children $L$ and $R$. By induction the values of bundles in the neighborhood of $r_{l+1}$-significant breakpoints of $v_L$ and $v_R$ are computed correctly. Consider some bundle $a_T$ in the neighborhood of an $r_l$-significant breakpoint of $v_T$. Denote by $\weight_{\epsilon'}((a_L,a_R))\cdot (a_L,a_R)$ the welfare maximizing weighted allocation to $v_L$ and $v_R$ that allocates at most $a_T$ items. By induction, the values of $v_L(a_L) ,v_R(a_R)$ are calculated correctly, since they are $\delta^{l+1}$-significant breakpoints of $v_L$ and $v_R$ (Lemma \ref{lemma-significant}). Hence $v_T(a_T)$ is calculated correctly, as needed.
\end{proof}

Now we can prove the lemma by induction from the top vertex to the leaves. The welfare of the distribution induced by $O$ on the top vertex $N$ is, by Lemma \ref{lemma-meta-bidders}, equal to a welfare maximizing weighted allocation to bidders $\set{1,\ldots ,\frac n 2}$ and $\set{\frac n 2+1,\ldots,n}$ in a structured range $\weight_{\epsilon}\lt(\lt(o_{\set{1,\ldots ,\frac n 2}},o_{\set{\frac n 2+1,\ldots,n}}\rt)\rt)\cdot \lt(o_{\set{1,\ldots ,\frac n 2}},o_{\set{\frac n 2+1,\ldots,n}}\rt)$. By the same arguments as above, the algorithm finds a welfare maximizing weighted allocation to bidders $\set{1,\ldots, \frac n 4}$ and $\set{\frac n 4+1,\ldots, \frac n 2}$ of at most $o_{\set{1,\ldots ,\frac n 2}}$ items. Proceeding similarly until we reach the leaves, we end up with a distribution in $\mathcal R$ with an expected welfare equal to that of $O$.
\end{proof}

\begin{lemma}
The algorithm runs in time $\poly\lt(n,\log m,\frac 1 \epsilon \rt)$.
\end{lemma}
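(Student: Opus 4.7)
The plan is to bound each component of the algorithm's cost by a polynomial in $n$, $\log m$, and $1/\epsilon$. First, since $\epsilon' = \Theta(\epsilon/\log n)$ and $\delta = \Theta(\epsilon/(\log n \cdot \log m))$, any expression of the form $\poly(n,\log m, 1/\delta)$ is automatically $\poly(n,\log m,1/\epsilon)$; moreover the hierarchical tree contains only $2n-1$ vertices, so it suffices to bound the work per vertex.

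Second, I would count the bundles evaluated at each vertex. For a vertex $T$ at level $l$ we have $v_T(m)\le n\cdot \max_i v_i(m)$, while $r_l=\delta^l \cdot \max_i v_i(m)/2$, so the number of $r_l$-significant $2^t$-breakpoints of $v_T$ is at most $\log_{1+\delta}(2n/\delta^l)+1 = O(\log(n/\delta^l)/\delta)$, which is polynomial in the desired parameters. Each neighborhood contributes at most $n+1$ bundles, and there are $O(\log m)$ choices of $t$, so each vertex needs $\poly(n,\log m,1/\epsilon)$ bundles to be evaluated.

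Third, I would bound the cost of computing $v_T(s)$ at a single such bundle. At a leaf this is a single value query. At an internal vertex $T$ with children $L,R$, Lemma \ref{lemma-optimize} applied to the two-bidder structured range on $\{L,R\}$ says that the welfare-maximizing $\weight_{\epsilon'}((a,b))\cdot(a,b)$ with $a+b\le s$ has $a$ and $b$ in the neighborhood of $2^t$-breakpoints of $v_L$ and $v_R$ for the common maximal $t$; the argument inside the proof of Lemma \ref{lemma-significant} then shows that these are in fact $r_{l+1}$-significant breakpoints, i.e.\ precisely the bundles whose $v_L,v_R$ values were tabulated when the algorithm processed level $l+1$. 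Hence $v_T(s)$ is obtained by enumerating $O(\log m)$ values of $t$ and, for each, the $\poly(n,\log m,1/\epsilon)$ many stored pairs of children bundles, which is polynomial. Locating a breakpoint of $v_T$ itself takes $O(\log m)$ such evaluations via binary search, and the top-down Step~2 is dominated by the same enumeration. Summing per-vertex work over $2n-1$ vertices gives total time $\poly(n,\log m,1/\epsilon)$.

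The main obstacle is preventing an exponential blow-up in the tree depth $\log n$. A naive implementation that recomputed $v_T(s)$ from scratch each time it was queried would incur a multiplicative factor of roughly $(\log m)^{\log n}$, which is only quasi-polynomial. The key point — already built into the design of Step~1 — is that Lemma \ref{lemma-significant} restricts the bundles of $v_L,v_R$ that are ever needed to evaluate $v_T$ to a polynomial-sized table of $r_{l+1}$-significant breakpoint neighborhoods. Bottom-up tabulation over these polynomial-sized tables therefore keeps the per-level cost polynomial and yields the claimed overall running time.
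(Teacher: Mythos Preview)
Your proposal is correct and follows essentially the same approach as the paper: both bound the number of $r_l$-significant breakpoints per vertex by $\poly(n,1/\delta)$ using the ratio $v_T(m)/r_l \le 2n/\delta^l$, use binary search at cost $O(\log m)$ per breakpoint, multiply by the $O(n)$ vertices in the tree, and appeal to Lemmas~\ref{lemma-optimize} and~\ref{lemma-significant} to argue that evaluating $v_T$ at a stored bundle only requires the already-tabulated children values. Your write-up is in fact somewhat more explicit than the paper's, particularly in isolating the $(\log m)^{\log n}$ blow-up as the obstacle and in spelling out why $1/\delta = \Theta((\log n \cdot \log m)/\epsilon)$ keeps everything polynomial.
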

\begin{proof}
Fix some vertex $T$ with children $R$ and $L$. Observe that $v_T(s)$, for any $s$, can be computed using only bundles in the neighborhood of $r$-significant $b$-breakpoints of $v_L$ and $v_R$ (for $b=1,2,4,\ldots,m$).

To see that these values can be obtained efficiently, notice that each valuation $v$ in level $l$ has only $\poly\lt(n,\frac 1 \delta\rt)$ $r_l$-significant $b$-breakpoints (by our choice of $r$ -- since $2n\cdot max_iv_i(m)\geq |O|$). Recall that via binary search a $b$-breakpoint of $v$ can be can be found in $\poly(\log m)$ value queries to $v$. For each breakpoint we also need to query all bundles in its neighborhood. Hence we need to query each valuation $v$ (of a bidder or a meta bidder) only $\poly\lt(\log m,\frac 1 \delta\rt)$ times. The number of valuations of bidders and meta-bidders in the tree is $O(n)$. and thus the total number of queries is still $\poly\lt(n,\log m,\frac 1 \delta\rt)$. Notice that the computational overhead is polynomial in the number of value queries: for each value query we compute a welfare maximizing allocation of the previous level in a structured range of $2$ bidders, which can be done in time $\poly\lt(\log m,\frac 1 \delta\rt)$. In total, the running time of the algorithm is $\poly\lt(n,\log m,\frac 1 \delta\rt)=\poly\lt(n,\log m,\frac 1 \epsilon\rt)$.
\end{proof}

\noindent This completes the proof of Theorem \ref{thm-general-fptas}


\section{Truthful in Expectation Mechanisms Have More Power}\label{sec-separation}

This section shows that polynomial-time truthful-in-expectation mechanisms are strictly more powerful than universally-truthful polynomial-time mechanisms. Ideally, we would like to prove this for multi-unit auctions. However, finding a non-trivial lower bound on the power of polynomial time mechanisms for multi-unit auctions, even deterministic ones, is a big open question. Hence we study a variant of multi-unit auctions that is artificially restricted, with the sole purpose of proving such a separation for the first time. We prove that universally truthful polynomial time mechanisms cannot provide an approximation ratio better than $2$ for this variant, while we provide a truthful in expectation FPTAS. As in multi-unit auctions, the (deterministic) VCG mechanism solves this problem optimally, but in exponential time.

In a \emph{restricted multi-unit auction}, a set of $m$ items, where $m$ is a power of $2$, is to be allocated to two bidders. The set of feasible allocations is restricted as follows: either no items are allocated, or all items are allocated with at least one item per bidder. Each bidder $i\in \{1,2\}$ has a valuation function $v_i$, given as a black box, specifying the bidder's value for each number of items. We restrict $v_i$ to be normalized ($v_i(0)=0$) and strictly increasing. The objective is to maximize social welfare, the sum of the values of the bidders. Our algorithms should run in time polynomial in $\log m$.

\begin{theorem}\label{thm-separation}
The following statements are true:

\begin{enumerate}
\item For every constant $\epsilon>0$, every universally truthful $(2-\epsilon)$-approximation mechanism for restricted multi-unit auction requires $\Omega(m)$ communication.

\item There exists a $(1+\epsilon)$-approximation algorithm for restricted multi-unit auctions that is truthful in expectation and runs in time $\poly\lt(\log m,\frac 1 \epsilon\rt)$ (an FPTAS).
\end{enumerate}
\end{theorem}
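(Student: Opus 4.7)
The plan is to handle the two parts separately.

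For part (2), I would reuse the machinery of Section \ref{sec:fptas} specialized to two bidders. The restricted feasibility set is $\{(0,0)\} \cup \{(s,m-s) : 1 \le s \le m-1\}$, and I would build an MIDR algorithm whose range consists of the weighted allocations $\{\weight_\epsilon(\vec s)\cdot \vec s : \vec s \text{ feasible}\}$. Lemma \ref{lemma-approx} immediately gives the $(1-\epsilon)$-approximation guarantee, since every feasible allocation (including the optimal one) sits in the range with weight at least $1-\epsilon$. To optimize efficiently and in strongly polynomial time I would, for each $t\in\{0,1,\ldots,\floor{\log m}\}$, compute the $r$-significant $2^t$-breakpoints of each bidder (with $r$ set as in Section 3.3 so that Lemma \ref{lemma-significant} applies), then enumerate all pairs $(s_1,s_2)$ with $s_1$ in a breakpoint neighborhood of $v_1$, $s_2$ in a breakpoint neighborhood of $v_2$, and $s_1+s_2=m$. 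Lemma \ref{lemma-optimize} (applied with $s=m$ and $n=2$) guarantees the welfare-maximizing pair in the range lies among these candidates; the feasibility equality $s_1+s_2=m$ simply filters the list. The entire procedure runs in $\poly(\log m, 1/\epsilon)$ time and value queries.

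For part (1), I would follow the three-stage strategy sketched in the introduction. Stage A (derandomization): fix a hard distribution $\mathcal F$ over valuation profiles and write the universally truthful mechanism $\A$ as a convex combination $\sum_r p_r M_r$ of deterministic truthful mechanisms. If $\A$ has expected approximation ratio $2-\epsilon$ on every instance, an averaging argument yields some $M_r$ that $(2-\epsilon/2)$-approximates on a constant fraction of $\mathcal F$. Stage B (characterization): show every deterministic truthful mechanism for restricted multi-unit auctions must be an affine maximizer, via a Roberts-style argument exploiting that valuations range over all strictly increasing functions, so the two-bidder domain is rich enough to force weak monotonicity into a weighted VCG rule; on its ``good'' subset of instances, $M_r$ therefore coincides with an affine maximizer with some range $\R\subseteq S$ and weights $\alpha_i,\beta_{\vec s}$.

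Stage C (lower bound on affine maximizers): if $\A$ uses polynomial communication then $|\R|$ must be polynomial, so the set of splits $K=\{k:(k,m-k)\in\R\}$ misses most values in $\{1,\ldots,m-1\}$. For a ``missed'' split $k^*$, I would plant valuations in which nearly all welfare is concentrated at the split $(k^*,m-k^*)$ (e.g., bidder $1$ values exactly $k^*$ items highly and bidder $2$ values exactly $m-k^*$ highly), so that any feasible split in $K$ gives welfare at most roughly half the optimum, contradicting the $(2-\epsilon/2)$ guarantee on $\mathcal F$. Choosing $\mathcal F$ to put sufficient mass on such planted instances finishes the lower bound, yielding $\Omega(m)$ communication.

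The main obstacle is Stage B: standard Roberts-style characterizations assume truthfulness over an unrestricted domain with a rich outcome space, whereas here the outcome space is essentially a one-dimensional chain of splits plus the empty allocation, and we only have truthfulness of $M_r$ guaranteed across all of $V^2$, not merely on $\mathcal F$. I would need to verify that ``weak monotonicity implies affine maximizer'' goes through for this outcome structure, and carefully align Stage A's averaging with Stage B's characterization so that the affine maximizer produced in Stage B is actually a $(2-\epsilon/2)$-approximation on enough planted instances from Stage C to derive the contradiction.
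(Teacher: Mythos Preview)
Your Part (2) is essentially the paper's approach. One wrinkle you gloss over: Lemma~\ref{lemma-optimize} is stated for $\sum_i s_i \le s$, and its proof constructs a competing allocation $\vec a$ that may drop items, so $a_1+a_2<m$ is possible---infeasible here. With $n=2$ and $m$ a power of $2$ the fix is easy (round $o_1$ to a multiple of $2^{t+1}$ and set $a_2=m-a_1$; then $a_2$ is automatically a multiple of $2^{t+1}$ and $a_2\ge o_2$), but it should be said. The paper handles this by introducing a modified weight $\rweight_\epsilon$ and noting the proofs go through with straightforward changes.

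Part (1) has the same three-stage skeleton as the paper, but Stage~B has a real gap. It is \emph{not} true that every deterministic truthful mechanism for restricted multi-unit auctions is an affine maximizer: a mechanism that sometimes outputs the empty allocation $(0,0)$ need not be one, and the Roberts-style characterization the paper cites (Theorem~\ref{thm-char}) requires that the mechanism \emph{always} allocate all items and have range at least $3$. Your sentence ``on its good subset of instances, $M_r$ therefore coincides with an affine maximizer'' is not meaningful---being an affine maximizer is a global structural property, not something a mechanism can satisfy on a subset of inputs. The paper's fix is the crux of the argument: it classifies the deterministic mechanisms in the support into \emph{imperfect} (output $(0,0)$ somewhere), \emph{tiny-range}, and affine maximizers; it then shows that any imperfect mechanism outputs $(0,0)$ on \emph{every} input whose values lie below a threshold $C_B$, and builds the hard family $U$ with all values below $C_A=\min_B C_B$. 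Hence imperfect mechanisms are useless on $U$, the ``good'' $M_r$ from Stage~A must always allocate, and only then does the characterization apply.

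Your Stage~C is also incomplete. You assert ``if $\A$ uses polynomial communication then $|\R|$ must be polynomial'' as if it were obvious, but this is precisely the step that requires work: the paper proves its contrapositive (an affine maximizer with range $t$ and positive weights needs communication at least $t$) by an explicit reduction from two-party set disjointness. Without that reduction your contradiction does not close. The paper's ordering is cleaner: first show directly that the good $M_r$ must output the unique optimum on more than $m/2$ of the instances in $U$ (hence $|\R|\ge m/2$), then invoke the disjointness lower bound.
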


Before proceeding with the proof, we discuss the differences between
restricted multi-unit auctions and the standard multi-unit auctions
discussed in previous sections, and the role these differences play in
the proof.

\begin{itemize}
\item \textbf{Strictly Increasing Valuations, $m$ is a power of $2$: } These restrictions are only there to simplify the proof. They can be removed without changing the statement of the theorem.

\item \textbf{Allocate all Items with at least One Item per Bidder or
    Allocate Nothing: } The all-items-are-allocated constraint fulfills the conditions of the characterization of deterministic truthful mechanisms for multi-unit auctions \cite{DS08-add}. The restriction that each bidder must receive at least one item, and the relaxation allowing the empty allocation, will prove useful in arguing about universally truthful randomized mechanisms which were not considered in \cite{DS08-add}.

\end{itemize}

To prove Theorem \ref{thm-separation}, first we bound the power of
universally truthful polynomial time mechanisms for restricted
multi-unit auctions. Then, we show that there exists a truthful in
expectation FPTAS for restricted multi-unit auctions.

\subsection{A Lower Bound on Universally Truthful Mechanisms}

\subsubsection{From Universally Truthful to Deterministic Mechanisms}

It is inconvenient to study randomized mechanisms directly. Therefore, we start by showing that the existence of a universally truthful mechanism with a good approximation ratio implies the existence of a deterministic mechanism that provides a good approximation on ``many'' instances. The following definition and propositions are adapted from \cite{DN07a}. We repeat the proof here for completeness.

\begin{definition}
Fix $\alpha \geq 1$, $\beta\in [0,1]$, and a finite set $U$ of instances of restricted multi-unit auctions. A deterministic algorithm $B$ for restricted multi-unit auctions is \emph{$(\alpha,\beta)$-good} on $U$ if $B$ returns an $\alpha$-approximate solution for at least a $\beta$-fraction of the instances in $U$.
\end{definition}

\begin{proposition}[essentially \cite{DN07a}]\label{prop-good-mechanism}
Let $U$ be some finite set of instances of restricted multi-unit auctions, and let $\alpha,\gamma\geq 1$. Let $A$ be a universally truthful mechanism for restricted multi-unit auctions that provides an expected welfare of $\frac{OPT(I)} \alpha$ for every instance $I \in U$ with expected communication complexity $cc(A)$. Then, there is a $\gamma\cdot cc(A)$-time (deterministic) algorithm in the support of $A$ that is $\lt(\frac 1 {\frac 1 \alpha -\frac 1 \gamma},{\frac 1 \alpha -\frac 1 \gamma}\rt)$-good on $U$ and has a communication complexity of $\gamma\cdot cc(A)$.
\end{proposition}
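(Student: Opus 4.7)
The plan is to identify a deterministic mechanism in the support of $A$ with the claimed properties via two Markov-type arguments—one on the communication complexity of the realization $A_r$, one on its per-instance approximation quality—combined by a probabilistic-method averaging over instances.

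Since $A$ is universally truthful, it decomposes as a distribution over deterministic truthful mechanisms $\{A_r\}_r$ indexed by its internal randomness $r$. First I would apply Markov's inequality to the random variable $cc(A_r)$, whose expectation is $cc(A)$ by hypothesis: the event $G = \{r : cc(A_r) \leq \gamma \cdot cc(A)\}$ then has probability at least $1 - 1/\gamma$. This immediately discharges the running-time/communication requirement for any realization drawn from $G$.

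Next, for each $I \in U$ define the normalized welfare $Y_I(r) = X_{A_r}(I)/OPT(I) \in [0,1]$. From $\ex_r[Y_I(r)] \geq 1/\alpha$ and $Y_I \leq 1$, I would obtain the ``restricted'' expectation bound
$$
\ex_r[Y_I(r) \cdot \mathbf{1}_G(r)] \;\geq\; \tfrac{1}{\alpha} - \Pr[\neg G] \;\geq\; \tfrac{1}{\alpha} - \tfrac{1}{\gamma} \;=\; \beta,
$$
where $\beta$ is the target fraction and $1/\beta = \alpha'$ the target approximation. For each $I$, a Markov-type argument on the $[0,1]$-valued random variable $Y_I \mathbf{1}_G$ (which has mean at least $\beta$) yields a lower bound on $\Pr_r[\,Y_I(r) \geq \beta \wedge r \in G\,]$; summing over $I$ by linearity of expectation gives $\ex_r[N(r) \mathbf{1}_G(r)] \geq \beta |U|$, where $N(r) = |\{I \in U : Y_I(r) \geq \beta\}|$. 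The probabilistic method then extracts some $r^* \in G$ for which $A_{r^*}$ is $\alpha'$-approximate on at least a $\beta$-fraction of $U$, as required.

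The main obstacle will be the per-instance Markov step. A naive application of reverse Markov to $Y_I \mathbf{1}_G$ at the threshold $t = \beta$—which equals its expectation—yields only the trivial bound $\Pr \geq 0$. To recover a useful per-instance bound, I would exploit the specific structure of $Y_I \mathbf{1}_G$, in particular the fact that it is identically $0$ on $\neg G$ (mass at least $1-\Pr[G]$), and orchestrate the two Markov inequalities in a coordinated fashion rather than applying them sequentially. Concretely, it is convenient to work with the ``truncated'' mechanism $\tilde A$ that runs $A_r$ when $r \in G$ and otherwise outputs the (feasible) empty allocation, and to carry out the per-instance averaging directly on $\tilde A$; the feasibility of the empty allocation in restricted multi-unit auctions is precisely what allows this substitution. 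This coordinated use of the two Markov bounds, inherited from \cite{DN07a}, is the heart of the argument.
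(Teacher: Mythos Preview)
Your approach mirrors the paper's: truncate high-communication realizations to the empty-allocation mechanism (your $\tilde A$ is the paper's $A'$), observe that the truncated mechanism still has expected normalized welfare at least $\beta=\tfrac{1}{\alpha}-\tfrac{1}{\gamma}$ on each instance, then average over $U$ and extract a realization by the probabilistic method.

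You correctly flag the per-instance reverse-Markov step as the obstacle, and you are right that it is vacuous at threshold $\beta$. But your proposed fix---``coordinate'' the two Markov bounds through $\tilde A$---is precisely what the paper already does, and it does \emph{not} close the gap: the paper simply asserts $\Pr[Y_I\mathbf{1}_G\ge\beta]\ge\beta$ from $\ex[Y_I\mathbf{1}_G]\ge\beta$, which is false in general (let $Y_I\mathbf{1}_G$ equal $\beta-\delta$ with probability close to $1$). Your claimed inequality $\ex_r[N(r)\mathbf{1}_G(r)]\ge\beta|U|$ requires exactly the per-instance bound you yourself called trivial, so it does not follow from what you have established.

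What \emph{does} follow cleanly is the weaker conclusion that some $r^*\in G$ satisfies $\tfrac{1}{|U|}\sum_{I\in U} Y_I(r^*)\ge\beta$ (average over $I$ first, then pick $r^*$). This suffices for the paper's downstream lower bound, because on the specific family $U$ every $Y_I$ is either $1$ or at most $\tfrac12+o(1)$, so a high average forces $Y_I(r^*)=1$ on an $\Omega(1)$ fraction of instances. The Proposition in the generality stated, however, is not established by either your outline or the paper's own argument.
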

\begin{proof}
For each mechanism $D$ in the support of $A$, Let $cc(D)$ denote its expected communication complexity. Let $A'$ be the mechanism obtained from $A$ by replacing in the support of $A$ each mechanism $D$ for which $cc(D)\geq  \gamma cc(A)$ by a mechanism that never allocates any items. Notice that the expected approximation ratio of $A'$ is at least $\alpha'=\frac 1 {\frac 1 \alpha -\frac 1 \gamma}$.

Fix some instance $I\in U$. The \emph{expected} welfare of $A'(I)$ is at least $\frac {OPT(I)} {\alpha'}$. The \emph{probability} that $A(I)$ has welfare at least $\frac {OPT(I)} {\alpha'}$ is lower-bounded by $\frac 1 {\alpha'}$: this is achieved if $A(I)$ is optimal with probability $\frac 1 {\alpha'}$ and has welfare $0$ otherwise. Hence, there exists a deterministic algorithm $B$ in the support of $A$ that returns an $\alpha'$-approximate solution for at least a $\frac 1 {\alpha'}$-fraction of the instances in $U$. Notice that the communication complexity of $B$ is $\gamma\cdot cc(A)$.
\end{proof}

Let $A$ be a universally truthful mechanism for restricted multi-unit auctions with communication complexity $\poly(\log m)$ and an approximation ratio of $2-\epsilon$, for a constant $\epsilon>0$. By the claim, there must exist a $\lt(2-\epsilon',\frac 1 2\rt)$-good deterministic mechanism in its support with communication complexity of $\log m\cdot cc(A)$ (for some constant $\epsilon'>0$ and some set of instances $U$). We will show that the communication complexity of this ``good'' mechanism for that $U$ is $\Omega(m)$, thus the communication complexity of $A$ is $\Omega\lt(\frac m {\log m}\rt)$.

\subsubsection{A Lower Bound on ``Good'' Deterministic Mechanisms}

To start, we recall the following characterization result for deterministic mechanisms for multi-unit auctions:

\begin{theorem}[\cite{DS08-add}]\label{thm-char}
Let $f$ be a truthful deterministic mechanism for multi-unit auctions with strictly monotone valuations that always allocates all items with range at least $3$. Then, $f$ is an affine maximizer.
\end{theorem}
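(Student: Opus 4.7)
The plan is to derive the affine maximizer structure from weak monotonicity (W-MON), following the template of Roberts' theorem adapted to the restricted domain of strictly monotone valuations. Since $f$ is truthful, for each bidder $i$ and any $v_i,v'_i,v_{-i}$, if $f(v_i,v_{-i}) = s$ and $f(v'_i,v_{-i}) = s'$, then W-MON gives $v'_i(s'_i) - v'_i(s_i) \geq v_i(s'_i) - v_i(s_i)$. Because $f$ always allocates all items, only the coordinate $s_i$ matters when fixing $v_{-i}$, and the analysis reduces to studying how the number of items going to bidder $i$ changes as $v_i$ varies.

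First, I would fix an ordered pair of outcomes $(a,b)\in \R^2$ with $a_i \ne b_i$, and define a threshold function $\Phi^i_{a,b}(v_{-i})$ equal to the critical value of $v_i(b_i)-v_i(a_i)$ above which the mechanism switches from producing $a$ to producing $b$ (with any fixed tie-breaking). The strict monotonicity of valuations together with the always-all-items constraint guarantees these switches are well-defined. Using W-MON applied to pairs of closely related $v_i$'s, I would then establish a Roberts-style linearity lemma: there exist constants $\alpha^{a,b}_j$ and $\beta^{a,b}$ such that
\[
\Phi^i_{a,b}(v_{-i}) \;=\; \sum_{j\neq i} \alpha^{a,b}_j\bigl[v_j(a_j)-v_j(b_j)\bigr] \;+\; \beta^{a,b}.
\]
The proof of this lemma proceeds by carefully constructing auxiliary strictly-monotone valuations that perturb $v_{-i}$ in one coordinate at a time and reading off the resulting change in $\Phi^i_{a,b}$; W-MON forces the dependence to be additive and homogeneous in each coordinate.

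Next I would invoke the range-at-least-$3$ hypothesis. Picking three distinct outcomes $a,b,c\in \R$, the mechanism's decisions among them must be transitive: on the "boundary" where $f$ is indifferent between $a$ and $b$, the choice between $b$ and $c$ is governed by $\Phi^i_{b,c}$, and composing these constraints with the triangle relation $\Phi^i_{a,b}+\Phi^i_{b,c}=\Phi^i_{a,c}$ forces $\alpha^{a,b}_j$ to depend only on $j$ (not on the pair $(a,b)$) and the intercepts to decompose as $\beta^{a,b}=\beta_a-\beta_b$. This yields a single $\alpha_j\ge 0$ per bidder (non-negativity follows from allocation monotonicity under W-MON) and a single $\beta_s$ per outcome, and hence $f(v)\in\argmax_{s\in \R}\bigl(\sum_i \alpha_i v_i(s_i)+\beta_s\bigr)$, which is exactly the affine maximizer form.

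The main obstacle is the Roberts-style linearity step, because the domain is restricted to strictly monotone valuations rather than arbitrary real-valued functions, so one cannot apply Roberts' original proof verbatim. The workaround is to carefully design monotone "probe" valuations that still realize enough local flexibility around any chosen profile, and to lean on the always-all-items constraint to exclude the degenerate empty-allocation branch that could otherwise break the linear structure. The range-$\geq 3$ assumption is also essential, since with only two outcomes the W-MON condition admits truthful rules (e.g., serial-dictatorship-like thresholds) that are not affine maximizers, so the transitivity argument genuinely requires three outcomes to pin down the coefficients.
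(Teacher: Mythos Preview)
The paper does not prove this theorem at all: it is quoted verbatim from \cite{DS08-add} and used as a black box in the lower-bound argument of Section~\ref{sec-separation}. There is therefore no ``paper's own proof'' to compare your proposal against.

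As for the proposal itself, your outline follows the standard Roberts-style template (W-MON $\Rightarrow$ threshold functions $\Rightarrow$ linearity $\Rightarrow$ consistency via a third outcome), which is indeed the route taken in \cite{DS08-add}. However, the sketch glosses over exactly the step you flag as the ``main obstacle'': on the restricted domain of strictly monotone valuations for multi-unit auctions, bidder $i$'s type is not an arbitrary vector indexed by outcomes but a monotone function $v_i:[m]\to\RR^+$, so the values $v_i(s_i)$ across outcomes are not independently controllable. Your plan to ``perturb $v_{-i}$ in one coordinate at a time'' and read off linearity is precisely where the real work lies, and the actual proof in \cite{DS08-add} requires substantial domain-specific constructions (exploiting that all items are allocated, so outcomes are one-dimensional in $s_1$) rather than a generic Roberts argument. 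Without those constructions spelled out, the proposal is a plausible roadmap but not a proof.
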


%

Fix some universally truthful mechanism $A$ for restricted multi-unit auctions. The support of $A$ contains three possible (non-disjoint) types of deterministic mechanisms:
\begin{enumerate}
\item \textbf{Imperfect Mechanisms:} Mechanisms $B$ where there exist
  valuations $u,v$ such $B(u,v)=(0,0)$.
\item \textbf{Tiny-Range Mechanisms:} Mechanisms that have a range of size at most $2$.
\item \textbf{Affine Maximizers:} Mechanisms that are affine maximizers.
\end{enumerate}

Notice that there are no other mechanisms in the support of $A$: a mechanism for restricted multi-unit auctions that always allocates some items must always allocate all items. If its range is of size at least $3$, then by Theorem \ref{thm-char} it must be an affine maximizer.

\begin{claim}
Let $B$ be a deterministic imperfect mechanism for restricted multi-unit auctions. Then, there is a constant $C_B>0$ such that $B(u',v')=(0,0)$ whenever $u'(m),v'(m)< C_B$.
\end{claim}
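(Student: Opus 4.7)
The plan is to invoke the taxation principle for truthful deterministic mechanisms: for each bidder $i$ and each fixed report $v_{-i}$ of the other bidder, there is a menu of (allocation, price) pairs---namely the image of the allocation/payment rule as bidder $i$'s report varies---and bidder $i$'s assigned outcome is always a menu entry maximizing her utility. I will use this twice, once per bidder, and exploit the structural restriction that the only feasible allocations are the empty one or allocations giving each bidder at least one item.

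Fix witnesses $u, v$ with $B(u,v) = (0,0)$ and let $p_1^*, p_2^*$ denote the corresponding payments. Fix bidder 2's report at $v$ and consider bidder 1's menu. Since bidder 1 reporting $u$ selects the entry $(0, p_1^*)$, truthfulness gives $u(s) - p_s \le -p_1^*$ for every other menu entry $(s, p_s)$, i.e., $p_s \ge u(s) + p_1^*$ whenever $s \ge 1$. Now take any $u'$ with $u'(m) < u(1)$. By monotonicity of $u'$ and strict monotonicity of $u$, $u'(s) \le u'(m) < u(1) \le u(s)$ for all $s \ge 1$, so $u'(s) - p_s < -p_1^*$ strictly. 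Hence bidder 1 strictly prefers the empty entry $(0, p_1^*)$, so $B$ allocates $0$ items to bidder 1 on input $(u',v)$. By the restriction that feasible nonempty allocations give each bidder at least one item, this forces $B(u', v) = (0,0)$.

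Now iterate with the roles swapped: with $B(u', v) = (0,0)$ in hand and payments $\tilde p_1, \tilde p_2$, a symmetric argument on bidder 2's menu (holding bidder 1's report fixed at $u'$) shows that whenever $v'(m) < v(1)$, bidder 2 strictly prefers the empty entry, so $B(u', v') = (0,0)$. Setting
\[
C_B \;=\; \min\{u(1), v(1)\},
\]
which is strictly positive since $u$ and $v$ are strictly increasing with $u(0)=v(0)=0$, yields the claim: $u'(m), v'(m) < C_B$ implies $B(u',v') = (0,0)$.

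The only delicate point is enforcing strict inequality so that tie-breaking in the menu does not spoil the argument; this is precisely why I use $u'(m) < u(1)$ rather than $\le$, and why strict monotonicity of the original witnesses $u, v$ (guaranteed by the restricted multi-unit auction model) is essential. Everything else is a direct application of the taxation principle combined with the all-or-nothing feasibility restriction.
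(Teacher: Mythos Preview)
Your proof is correct and follows essentially the same approach as the paper: fix one bidder's report, use truthfulness to show the other bidder still receives zero items when her valuation is uniformly small, invoke the feasibility constraint to conclude the full allocation is empty, and then repeat for the other bidder. The only cosmetic differences are that you phrase the incentive step via the taxation principle while the paper cites weak monotonicity, and you swap bidders in the opposite order; your constant $C_B=\min\{u(1),v(1)\}$ coincides with the paper's $\min_{k>0}\min(u(k),v(k))$ since the valuations are non-decreasing.
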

\begin{proof}
Let $u,v$ be such that $B(u,v)=(0,0)$. Let $C_B=\min_{k>0}\min(u(k),v(k))$. Since valuations are normalized and strictly increasing, $C_B >0$. Consider some $u',v'$ as in the statement of the claim. Observe that $B(u,v')=(0,0)$: by weak monotonicity (see \cite{Nis07}) bidder $1$ should be allocated no items, hence bidder $2$ is allocated no items (recall that every non-empty feasible allocation assigns at least one item to each bidder). Similarly, $B(u',v')=(0,0)$.
\end{proof}

Let $A$ be a universally truthful mechanism for restricted multi-unit auctions that provides an approximation ratio of $2-\epsilon$.  Define $C_A=\min_BC_B$, where the minimum is taken over all imperfect mechanisms in $A$'s support. Fix a constant $\sigma$ such that $\sigma << \epsilon$. For each integer $k$ such that $1 \leq k \leq m-1$, define the instance $I_k =(u_k,v_k)$ as follows: $u_k(t)= (\sigma C_A/2m) t$ for all $t<k$, and $u_k(t) = C_A/2$ for all $t\geq k$; $v_k(t)=(\sigma C_A/2m) t$ for all $t<m-k$, and $v_k(t)=C_A/2$ for all $t\geq m-k$. Let $U=\set{I_k}_k$. Notice that the optimal welfare for $I_k$ is $C_A$ and is achieved by the allocation $(k,m-k)$. Any other allocation provides a welfare of at most $C_A/2 + \sigma C_A/2 < C_A / (2-\epsilon)$.

\begin{claim}
Let $A$ be a universally truthful mechanism for restricted multi-unit auctions that provides an approximation ratio of $2-\epsilon$, for some constant $\epsilon$. Then, there exists an affine maximizer in the support of $A$ with a range of size $\Omega(m)$ that is $\lt(2-\epsilon',\frac 1 2\rt)$-good on $U$ where both bidders have positive weights, for some constant $\epsilon'>0$.
\end{claim}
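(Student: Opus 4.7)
The plan is to apply Proposition~\ref{prop-good-mechanism} to extract from the support of $A$ a deterministic mechanism $B$ that is $(2-\epsilon', 1/2)$-good on $U$, and then to case-analyze the three categories of support mechanisms to deduce that $B$ must be an affine maximizer with range of size $\Omega(m)$ and both weights positive. Concretely, applying the proposition with $\alpha = 2-\epsilon$ and any sufficiently large constant $\gamma$ (e.g.\ $\gamma = 4/\epsilon$, so that $1/\alpha - 1/\gamma > 1/2$) yields such a $B$ with some constant $\epsilon' > 0$.

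\emph{Ruling out imperfect and tiny-range.} If $B$ were imperfect, then $C_B \ge C_A$ by definition of $C_A$; since every $I_k \in U$ has $u_k(m) = v_k(m) = C_A/2 < C_B$, the preceding claim would force $B(u_k, v_k) = (0,0)$ on all of $U$, giving welfare $0$ and contradicting goodness. If $B$ were tiny-range, a direct check using $\sigma \ll \epsilon'$ shows that any single non-empty allocation $(a, m-a)$ gives welfare at most $C_A(1+\sigma)/2 < C_A/(2-\epsilon')$ on every $I_k$ with $k \ne a$; hence at most two instances in $U$ could be well-approximated, contradicting $(m-1)/2$-goodness. Consequently $B$ is an affine maximizer with some range $R$, non-negative weights $\alpha_1, \alpha_2$, and offsets $\beta_{\vec s}$.

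\emph{Range size.} The same welfare computation shows that on every good instance $I_k$ the unique feasible allocation attaining welfare $\ge C_A/(2-\epsilon')$ is $(k, m-k)$. Thus $B$ must output $(k, m-k)$ on each good $I_k$, and so $R$ contains at least $(m-1)/2$ distinct allocations, giving $|R| = \Omega(m)$.

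\emph{Positive weights---the main obstacle.} The case $\alpha_1 = \alpha_2 = 0$ is immediate, since then $B$ is constant and the welfare computation above shows any fixed allocation is good on at most one instance of $U$. The subtle case is when exactly one weight vanishes, say $\alpha_1 = 0$, in which case $B$'s output depends only on $v$. I would exploit the near-symmetry of $U$ (each $I_{m-k}$ is the bidder-swap of $I_k$, since $u_k = v_{m-k}$ and $v_k = u_{m-k}$) to argue that a one-sided affine maximizer cannot simultaneously give the correct output on enough mirror pairs. Failing a direct argument, one can fall back on a perturbation approach---slightly increasing $\alpha_1$ to a small positive value while preserving all outputs on the finite set $U$, and then identifying the perturbed mechanism with another affine maximizer in the support of $A$ via a convexity or averaging step over $A$'s distribution. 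Formalizing this last point is what I expect to be the primary technical difficulty.
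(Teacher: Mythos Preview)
Your overall plan matches the paper's proof: apply Proposition~\ref{prop-good-mechanism} to extract a deterministic $B$ that is $(2-\epsilon',\tfrac12)$-good on $U$; rule out imperfect mechanisms (since every $I_k$ has $u_k(m),v_k(m)<C_A$); observe that on each good instance $I_k$ the only $(2-\epsilon')$-approximate allocation is the optimum $(k,m-k)$, so $B$'s range has size at least $(m-1)/2$ (which in particular rules out tiny range); and invoke Theorem~\ref{thm-char} to conclude $B$ is an affine maximizer. The paper takes $\gamma=\log m$ rather than a constant; either choice suffices here.

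The divergence is on the positive-weights step. The paper's entire argument is one line: ``Observe that the weights of both bidders must be positive: otherwise the mechanism is a tiny range mechanism'' --- i.e., a zero weight is asserted to force range $\le 2$, which contradicts the already-established $|R|\ge m/2$. You instead flag this as the ``main obstacle'' and propose two routes, neither of which works as written. Your perturbation fallback is genuinely wrong: bumping $\alpha_1$ to a small positive value produces a \emph{different} deterministic mechanism, and there is no reason whatsoever for it to lie in the support of $A$ (the support is a fixed collection of deterministic mechanisms, not closed under perturbation); no ``convexity or averaging step over $A$'s distribution'' can manufacture a deterministic mechanism in the support either. Your symmetry idea, as sketched, also fails to yield a contradiction: when $\alpha_1=0$ the output depends only on bidder~2's valuation, and the mirror pair $I_k,I_{m-k}$ have \emph{distinct} second components $v_k\neq v_{m-k}$, so nothing prevents $B$ from outputting the correct allocation on both.

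In short, up to the last step your argument coincides with the paper's; on that step you correctly sense a subtlety but neither reproduce the paper's (terse) assertion nor supply a working alternative.
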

\begin{proof}
By Proposition \ref{prop-good-mechanism} there exists a deterministic
mechanism $B$ in the support of $A$ that provides a
$(2-\epsilon')$-approximation on more than half of the instances in
$U$, for some constant $\epsilon'>0$ (choose $\gamma=\log m$). Notice
that $u_k(m),v_k(m) < C_A$ for each instance $I_k=(u_k,v_k)$ in $U$,
and as a result any imperfect mechanism in the support of $A$ does not
allocate any items on any instance in $U$.  Therefore, $B$ cannot be
imperfect. $B$ provides an approximation ratio of $2-\epsilon$ for
more than half of the instances in $U$, that is, outputs the optimal
allocation for more than half of the instances in $U$. Every instance
in $U$ has a different optimal allocation, and thus the range of $B$
is of size at least $m/2$. In particular, $B$ is not a tiny-range
mechanism. Thus, $B$ allocates all items and has a range of size at
least $\frac m 2$. By Theorem \ref{thm-char} it must be an affine
maximizer. Observe that the weights of both bidders must be positive:
otherwise the mechanism is a tiny range mechanism. 
\end{proof}

To finish the proof of the lower bound the following claim suffices:

\begin{claim}[essentially \cite{DN07b}]
An affine maximizer with a range of size $t$ and a positive weight for each bidder has communication complexity of at least $t$.
\end{claim}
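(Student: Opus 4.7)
My plan is to reduce a communication problem of known complexity $\Omega(t)$---two-party set disjointness on a universe of size $t$---to the evaluation of $f(v_1,v_2)$, so that any protocol computing $f$ yields a protocol for disjointness of essentially the same cost, and the classical lower bound transfers. First I exploit the structure of the range: in a restricted multi-unit auction every nonempty feasible allocation has the form $(k,m-k)$ with $1 \leq k \leq m-1$, so I can label the allocations in $\mathcal{R}$ as $\vec{s}^{(1)},\ldots,\vec{s}^{(t)}$ with $s^{(1)}_1 < \cdots < s^{(t)}_1$, which automatically gives $s^{(1)}_2 > \cdots > s^{(t)}_2$ (absorbing $(0,0)$ if necessary). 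Because $\alpha_1,\alpha_2 > 0$, every allocation in the range is realizable as the output of $f$: taking $v_i$ to be large at the relevant coordinate of $\vec{s}^{(i)}$ and small elsewhere forces $\vec{s}^{(i)}$ to be the unique affine maximizer, dominating both the weights and the additive constants.

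For the reduction, given a disjointness instance $(A,B)$ with $A,B \subseteq [t]$, Alice constructs a strictly monotone $v_1^A$ and Bob constructs a strictly monotone $v_2^B$ locally, so that the index $i^{*}$ chosen by $f(v_1^A,v_2^B)$ reveals whether $A \cap B = \emptyset$. Monotonicity precludes a direct indicator encoding, so I use a cumulative ``large-jump'' encoding: at the breakpoint $s^{(i)}_1$, the valuation $v_1^A$ is boosted by a large multiplicative factor $M$ whenever $i \in A$ and by a negligible factor otherwise, and symmetrically for $v_2^B$ at $s^{(i)}_2$ (with Bob's encoding in the reverse order, since $s^{(i)}_2$ is decreasing in $i$). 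With $M$ chosen much larger than $\max_i |\beta_{\vec{s}^{(i)}}|/\min(\alpha_1,\alpha_2)$ together with the total accumulated small increments, the affine objective $\alpha_1 v_1^A(s^{(i)}_1) + \alpha_2 v_2^B(s^{(i)}_2) + \beta_{\vec{s}^{(i)}}$ at $\vec{s}^{(i)}$ is dominated by terms that depend only on whether $i$ belongs to $A$ and $B$, and a careful choice of the tiny perturbations makes $i^{*}$ directly encode the disjointness bit.

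The main obstacle is designing the encoding so that simultaneously (i) both valuations remain strictly monotone and strictly increasing as required, (ii) the fixed additive constants $\beta_{\vec{s}^{(i)}}$ do not corrupt the argmax in an adversarial way, and (iii) the index $i^{*}$ reported by $f$ unambiguously encodes whether $A \cap B$ is empty. Monotonicity rules out the direct indicator encoding and forces a cumulative structure, whose argmax must then be analyzed so that it isolates the disjointness bit despite the fixed $\beta$'s. Tuning the jump magnitude $M$ relative to $\max_i |\beta_{\vec{s}^{(i)}}|$, and handling the possible inclusion of $(0,0)$ in $\mathcal{R}$, are the crux of making the reduction work.
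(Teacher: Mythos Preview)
Your high-level strategy---reduce two-party set disjointness on $[t]$ to evaluating the affine maximizer---matches the paper exactly. The difference is entirely in the encoding of the inputs, and here the paper's construction is considerably simpler and sidesteps the obstacle you identify as the crux.

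The paper exploits the positive weights directly by \emph{dividing them out}: Alice sets $u(i) = (2ip + a_{f(i,m-i)}\,p)/\alpha_1$ and Bob sets $v(i) = (2ip + b_{f(m-i,i)}\,p)/\alpha_2$, with $p \gg \max_{\vec s}|\beta_{\vec s}|$. The linear term $2ip$ alone guarantees strict monotonicity, and the indicator bump $p\cdot a_j$ (resp.\ $p\cdot b_j$) is purely additive and local to the $j$th allocation. After multiplying back by $\alpha_1,\alpha_2$, the affine objective at allocation $j$ is exactly $2mp + (a_j + b_j)p + \beta_j$, so disjointness is read off from whether the optimal \emph{welfare value} reaches $(2m+2)p$, not from the identity of the argmax index.

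Your cumulative multiplicative encoding is more delicate, and as stated contains a gap: with cumulative jumps, $v_1^A(s_1^{(i)})$ depends on all of $A\cap[i]$, not just on $[i\in A]$, so the assertion that ``the affine objective at $\vec s^{(i)}$ is dominated by terms that depend only on whether $i$ belongs to $A$ and $B$'' is false. The argmax of $\alpha_1 M^{|A\cap[i]|} + \alpha_2 M^{|B\cap[i,t]|}$ does not obviously isolate the disjointness bit, and the ``careful choice of tiny perturbations'' is carrying unexplained weight. The paper's linear-base-plus-indicator trick, together with the $1/\alpha_i$ scaling, dissolves all three of your anticipated obstacles at once and yields a one-line welfare check.
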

\begin{proof}
Let $B$ be an affine maximizer. Denote its range by $\mathcal
R$. There exists non-negative constants $w_1,w_2$, and constants
$\{c_{(s_1,s_2)}\}_{(s_1,s_2) \in {\mathcal R}}$, such that
$B(u,v)=\max_{(s_1,s_2)\in \mathcal R}(w_1 u(s_1)+ w_2 v(s_2)+c_{(s_1,s_2)})$ for all valuations $u,v$.

We reduce from the disjointness problem on $t$ bits. In this problem Alice holds a string $(a_1,\ldots ,a_t)\in \{0,1\}^t$ and Bob holds a string $(b_1,\ldots ,b_t)\in \{0,1\}^t$. The goal is to determine if there exists some $i$ such that $a_i=b_i=1$. It is known that any deterministic algorithm for this problem has a communication complexity of $t$.

Let $p>>\max_{(s_1,s_2)}|c_{(s_1,s_2)}|$. Define some one-to-one and
onto correspondence $f:\mathcal R\rightarrow [t]$. Define the
following valuations: $u(i)=(2i\cdot p+a_{f((i,m-i)}\cdot p)/w_1$ if
$(i,m-i)\in \mathcal R$, otherwise $u(i)=2i\cdot p / w_1$. Also define
$v(i)=(2i\cdot p+b_{f((m-i,i))}\cdot p)/w_2$ if $(m-i,i)\in \mathcal
R$, otherwise $v(i)=2i\cdot p / w_2$. The (optimal) solution returned
by $B$ on input $(u,v)$ has welfare $(2m+2) p$ if and only if there is some $i$ with $a_i=b_i=1$. Hence, the communication complexity of $B$ is at least $t$.
\end{proof}
%
%
%
%

\subsection{An FPTAS for Restricted Multi-Unit Auctions}

All that is left is to show a better-than-2 truthful in expectation mechanism for restricted multi-unit auctions:
\begin{lemma}
There exists a truthful-in-expectation FPTAS for restricted multi-unit auctions.
\end{lemma}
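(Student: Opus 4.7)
The plan is to adapt the warmup MIDR construction of Section 3.2 (Figure \ref{algwarmup}) to the restricted feasibility structure. Let $L=\log m$ and, for each $s\in\{1,\ldots,m-1\}$, let $t(s)$ be the largest integer with $2^{t(s)}\mid s$. Because $m=2^L$ is a power of two, $t(m-s)=t(s)$ for all such $s$. Set $\delta=\frac{\ln(1/(1-\epsilon))}{2\log m+2}$ and define the distributional range
\[
\R \;=\; \{\text{empty allocation}\}\;\cup\;\bigl\{\,w(s)\cdot(s,m-s)\;:\;s\in\{1,\ldots,m-1\}\bigr\},\qquad w(s)=(1-\epsilon)(1+2\delta)^{t(s)}.
\]
Each element of $\R$ is a distribution over feasible allocations (either empty or splitting all $m$ items with at least one per bidder), so an algorithm that always picks the expected-welfare maximizer in $\R$ is MIDR and thus truthful in expectation by Proposition 2.5. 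Since $w(s)\ge 1-\epsilon$ for all $s$, the optimal $\R$-distribution has expected welfare at least $(1-\epsilon)\cdot\mathrm{OPT}$, yielding the desired approximation ratio.

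The remaining task is to compute $\arg\max_{s}f(s)$ where $f(s)=w(s)(v_1(s)+v_2(m-s))$ in time $\poly(\log m, 1/\epsilon)$. Mirroring Lemma \ref{lemma-optimize}, I would iterate $t=0,1,\ldots,L-1$ and, for each $t$, examine only the neighborhoods of $2^t$-breakpoints of $v_1$ and of $v_2$, returning the best feasible split $(s,m-s)$ found over all $t$. The main obstacle is re-proving the breakpoint lemma in the restricted setting, because the original proof moves items from one bidder to an empty pool, whereas here $s_1+s_2=m$ is rigid. I would argue by contradiction: if an optimizer $(o_1,m-o_1)$ with $t=t(o_1)$ has $o_1$ more than $2\cdot 2^t$ above the nearest $2^t$-breakpoint $\bar o_1$ of $v_1$ (symmetric case for $v_2$), then in the interval $[\bar o_1,o_1)$ there exists a multiple $a_1$ of $2^{t+1}$ with $a_1\ge\bar o_1$. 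Setting $a_2=m-a_1\ge m-o_1$ gives $v_1(a_1)\ge v_1(\bar o_1)\ge v_1(o_1)/(1+\delta)$ and $v_2(a_2)\ge v_2(m-o_1)$ by monotonicity, while $w(a_1)\ge(1+2\delta)w(o_1)$ since $t(a_1)\ge t+1$; a short calculation analogous to the one at the end of Lemma \ref{lemma-optimize} yields $f(a_1)>f(o_1)$, a contradiction.

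To obtain a strongly polynomial running time (independent of $\log\max_i v_i(m)/v_i(1)$), I would restrict to $r$-significant breakpoints as in Section 3.3, choosing $r=\tfrac{\delta}{2}\max(v_1(m),v_2(m))$. The analogue of Lemma \ref{lemma-significant} is even simpler here: if $v_1(o_1)<r$, then since $v_2(m-o_1)\le v_2(m)\le 2\max_i v_i(m)$ and the single-bidder allocation $(1,m-1)$ or $(m-1,1)$ is in $\R$ with weight $\ge 1-\epsilon$, one can exhibit an alternative with strictly larger $f$, contradicting optimality. With this restriction, each valuation has only $O(1/\delta)=O(\log m/\epsilon)$ significant $2^t$-breakpoints per value of $t$, all findable by $O(\log m)$ binary searches.

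Finally, putting the pieces together: there are $O(\log m)$ values of $t$; for each, the algorithm locates $\poly(\log m,1/\epsilon)$ significant breakpoints in $v_1$ and $v_2$, evaluates $f$ at each bundle in their $3$-point neighborhoods, and returns the maximizer. The total number of value queries and arithmetic operations is $\poly(\log m,1/\epsilon)$, and the standard VCG-for-MIDR payments make the mechanism truthful in expectation, completing the lemma.
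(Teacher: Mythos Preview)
Your approach is the paper's: adapt the two-bidder MIDR of Figure~\ref{algwarmup} to the restricted feasibility set. The range $\R$, the weight $w(s)=(1-\epsilon)(1+2\delta)^{t(s)}$ (which indeed coincides with $\rweight_\epsilon$ on feasible splits, since $m=2^L$ forces $t(m-s)=t(s)$ and the $p$-factor is always~$1$), and your adaptation of Lemma~\ref{lemma-optimize} to the rigid constraint $s_1+s_2=m$ are all correct. Your breakpoint argument---choose a multiple of $2^{t+1}$ in $[\bar o_1,o_1)$ and hand the freed items to bidder~2---is in fact cleaner than a direct port of the paper's redistribution step, which relies on slack that is unavailable here.

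The one gap is in the significant-breakpoint step. Your proposed contradiction---that if $v_1(o_1)<r$ then $(1,m-1)$ or $(m-1,1)$ strictly beats $(o_1,m-o_1)$---does not go through: $(1,m-1)$ has weight only $1-\epsilon\le w(o_1)$, and the inequality $(1-\epsilon)(v_1(1)+v_2(m-1))>w(o_1)(v_1(o_1)+v_2(o_2))$ simply fails in general. The conclusion you want is nonetheless true, and the right alternative is the same move as in your breakpoint lemma: if $o_1$ is not a power of two, take $a_1=o_1-2^{t(o_1)}\ge 1$, so that $w(a_1)\ge(1+2\delta)w(o_1)$ and $v_2(m-a_1)\ge v_2(o_2)$; optimality of $o_1$ then forces $v_1(o_1)\ge 2\delta\, v_2(o_2)$, which combined with $v_1(o_1)+v_2(o_2)\ge(1-\epsilon)\max_i v_i(m-1)$ yields $v_1(o_1)>r$. (If $o_1$ is a power of two it already lies in the neighborhood of the always-significant breakpoint~$0$.) This is the restricted-setting replacement for the Claim inside Lemma~\ref{lemma-significant}, where the paper zeroes out the small bidder to gain a $(1+2\delta)$ factor via the $p$-term; zeroing out is infeasible here, so the weight gain must come from the $2^t$-structure instead. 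With this correction your argument is complete and matches the paper's.
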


 A simple variation of the FPTAS for a fixed number of
bidders yields an FPTAS for restricted multi-unit
auctions. Specifically, we will show that there is a $\poly\lt(\log
m,\frac 1 \epsilon\rt)$ algorithm for multi-unit auctions that
provides a $(1+\epsilon)$-approximation to the optimal solution in the
range of allocations permitted in restricted multi-unit
auctions. Furthermore, this algorithm optimizes over some range of
distributions over allocations $\mathcal R$. The support of each
distribution $D\in\mathcal R$ contains only allocations that permitted
in restricted multi-unit auctions. This implies the existence of a
truthful in expectation $(1+\epsilon)$ approximation algorithm for
restricted multi-unit auctions.

We start by defining a new weight function for restricted multi-unit auction:
\[
\rweight_\epsilon(\vec s)=\left\{
  \begin{array}{ll}
    0, & \hbox{$\vec s=(m,0)$ or $\vec s=(0,m)$;}\\
    \weight_\epsilon(\vec s), & \hbox{otherwise.}
  \end{array}
\right.
\]
A \emph{restricted $\epsilon$-structured weight} is defined similarly
to a structured range, but with respect to $\rweight_\epsilon$:
$\set{\rweight_{\epsilon}(\vec s)\cdot \vec s}_{\vec s\in S}$, where
$S$ is the set of all allocations of $m$ items to these $n$
bidders. The definition of a weighted allocation is also done now
using $\rweight_\epsilon$. The range of the algorithm $\mathcal R$
will be restricted $\epsilon$-structured range of allocations of $m$
items to the two bidders. The algorithm is very similar to the FPTAS
for a constant number of bidders, but with the following changes: the
expected welfare of a weighted allocation calculated in step $2$ is
defined using $\rweight_\epsilon$, and not using $\weight_\epsilon$;
in the second bullet of step $1$ we consider only allocations that are
permitted for restricted multi-unit auctions.

The correctness of the FPTAS described above follows by
straightforward modifications to the proofs of section
\ref{sec:fptas} for a constant number of bidders.


\subsubsection*{Acknowledgements}
We thank Bobby Kleinberg, Noam Nisan, Sigal Oren, and Chaitanya Swamy for helpful discussions and comments.

\bibliographystyle{plain}
\bibliography{bib}

\end{document}